\newtheorem{Corollary}{Corollary}
\newtheorem{proposition}{Proposition}
\newtheorem{remark}{Remark}
\newtheorem{proof}{Proof}
\begin{document}
\newcommand{\beq}{\begin{equation}}
\newcommand{\deq}{\end{equation}}

\newcommand{\baq}{\begin{eqnarray}}
\newcommand{\daq}{\end{eqnarray}}

\newcommand{\baqm}{\begin{eqnarray*}}
\newcommand{\daqm}{\end{eqnarray*}}

\newcommand{\up}[1]{\overline{#1}}
\newcommand{\down}[1]{\underline{#1}}

\newcommand{\ti}[1]{\tilde{#1}}
\newcommand {\R} {\rm I\!R}
\newcommand{\qed}{\rightline{$\Box$}}
\newcommand{\dfn}{\stackrel{\Delta}{=}}
\newcommand{\N}{I\!\!N}
\newcommand{\Pro}{I\!\!P}
\newcommand{\E}{I\!\!E}
\newcommand{\one}{{\bf 1}}
\newcommand{\range}[2]{#1,\dots,#2}

\newcommand{\bn}{\binom}

\newcommand{\eps}{\varepsilon}
\newenvironment{pf}{{\bf Proof. }}{\hfill $\square$\medskip}
\newtheorem{thm}{Theorem}[section]
\newtheorem{ass}{Assumption}[section]
\newtheorem{defi}{Definition}[section]
\newtheorem{prop}{Proposition}[section]
\newtheorem{cor}{Corollary}[section]
\newtheorem{lemma}{Lemma}[section]
\newtheorem{rem}{Remark}[section]
\newtheorem{algo}{Algorithm}[section]
\def\R{\mathop{\rm I\kern -0.20em R}\nolimits}
\newtheorem{condition}{Condition}

\newcommand{\tprod}{\otimes}
\newcommand{\bigtprod}{\bigotimes}
\newcommand{\tprodcon}{\odot}
\newcommand{\bigtprodcon}{\bigodot}

\title{Optimal Energy-Delay Tradeoff for Opportunistic Spectrum Access in Cognitive Radio Networks}

\author{Oussama Habachi*\thanks{*Corresponding author: oussama.habachi@unilim.fr}, Yezekael Hayel and Rachid El-Azouzi\\CERI/LIA, University of Avignon, Agroparc BP 1228, Avignon, France}

\maketitle

\begin{abstract}
Cognitive radio (CR) has been considered as a promising technology to enhance spectrum efficiency
via opportunistic transmission at link level. Basic CR features allow SUs to transmit only when the licensed primary
channel is not occupied by PUs. However, waiting for idle time slot may include large packet
delay and high energy consumption. We further consider that the SU may decide, at any moment, to use another dedicated way  of communication (3G) in order to transmit its packets. Thus, we consider an Opportunistic Spectrum Access (OSA) mechanism that takes into account packet delay and energy consumption.  We formulate the OSA problem as a Partially Observable Markov Decision Process (POMDP) by explicitly considering the energy consumption as well as packets' delay, which are often ignored in existing OSA solutions. Specifically, we consider a POMDP with an average reward criterion.   We derive structural properties of the value function and we show the existence of optimal strategies in the class of the threshold strategies.
For implementation purposes, we propose online learning mechanisms that estimate the PU activity  and determine the appropriate threshold strategy on the fly. In particular, numerical illustrations validate our theoretical findings.

\end{abstract}
\section{Introduction}
The access to  spectrum frequencies is defined by licenses assigned to PUs. The latter must be conform to the specifications described in the license (e.g. location of the base station, frequency and the maximum transmission power). Nonetheless, a recent study made by the Federal Communications Commission (FCC) has proved that some frequency bands are not sufficiently used by licensed users at a particular time and in a specific location \cite{EkHoss}.

Cognitive radio, which is a new paradigm for designing wireless communication systems, has appeared in order to enhance the utilization of the radio frequency spectrum. It has been considered as the key technology that enable SUs to access the licensed spectrum. A cognitive user, as defined in \cite{mitola}, is a mobile who has the faculty to adapt its transmission parameters (e.g. frequency and modulation) to the wireless environment, and support different communication standards (e.g. GSM, CDMA, WiMAX and  WiFi). Moreover, when there is no opportunity to transmit over licensed primary channels,  SUs may have the possibility to transmit on dedicated channels, generally, with a higher cost and/or a lower throughput than transmitting over licensed primary channels. The possibility of having dedicated channels reserved for secondary mobiles has been proposed in \cite{Ak06} and \cite{Jaganathan}.


{
In this paper, we develop a threshold-based OSA for SUs, taking into account the energy and the delay, that can be applied in different cognitive radio context. For example, as we can see in Figure \ref{adhoc}, a SU-Tx, i.e. transmitter, is equipped with two transceivers (a Software Defined Radio (SDR) transceiver to sense and access the licensed spectrum and a control transceiver to notify the SU-Rx, i.e. receiver,  by the channel that will be used for the transmission). SU-Tx communicates with other SUs through an ad-hoc connection using a spectrum hole of a licensed frequency. This  scenario was studied in \cite{Su08}. The model, that we consider in our paper, is also suited for the scenario depicted in Figure \ref{CRnetwork} where the SU is a cognitive base station which is able to sense the activity of a primary base station, and then takes profit of spectrum holes for transmitting on the downlink. Indeed, the SU is a base station which uses the licensed spectrum to transmit to its users when the spectrum is not used by the primary base station.}
\begin{figure}
\begin{center}pdf
  \includegraphics[width=0.35\textwidth]{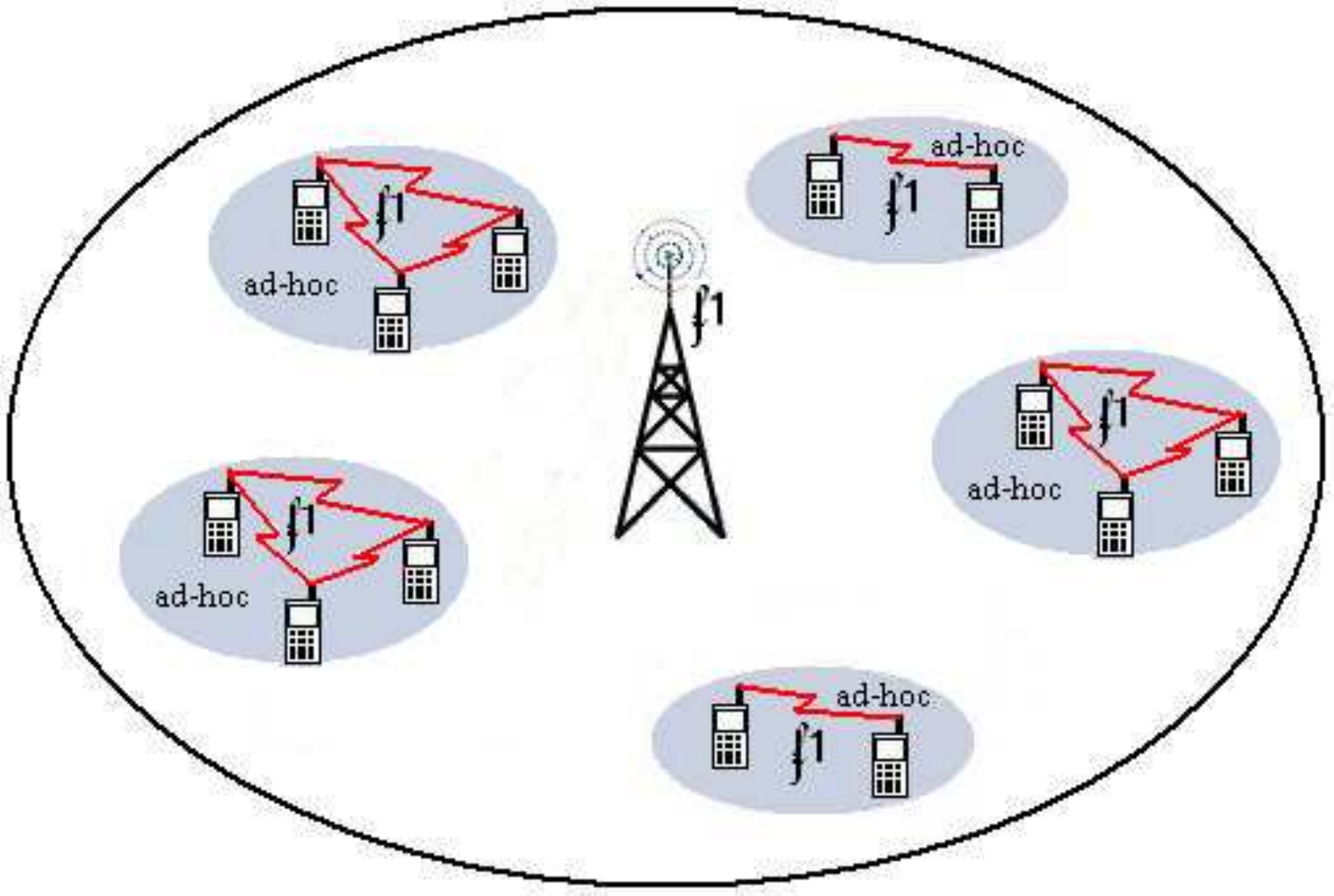}
  \caption{Using cognitive radio in ad-hoc communication. If the licensed frequency $f1$ is not used by PUs, SUs can communicate in ad-hoc mode using $f1$.}\label{adhoc}
  \end{center}
\end{figure}
\begin{figure}
\begin{center}
  \includegraphics[width=0.35\textwidth]{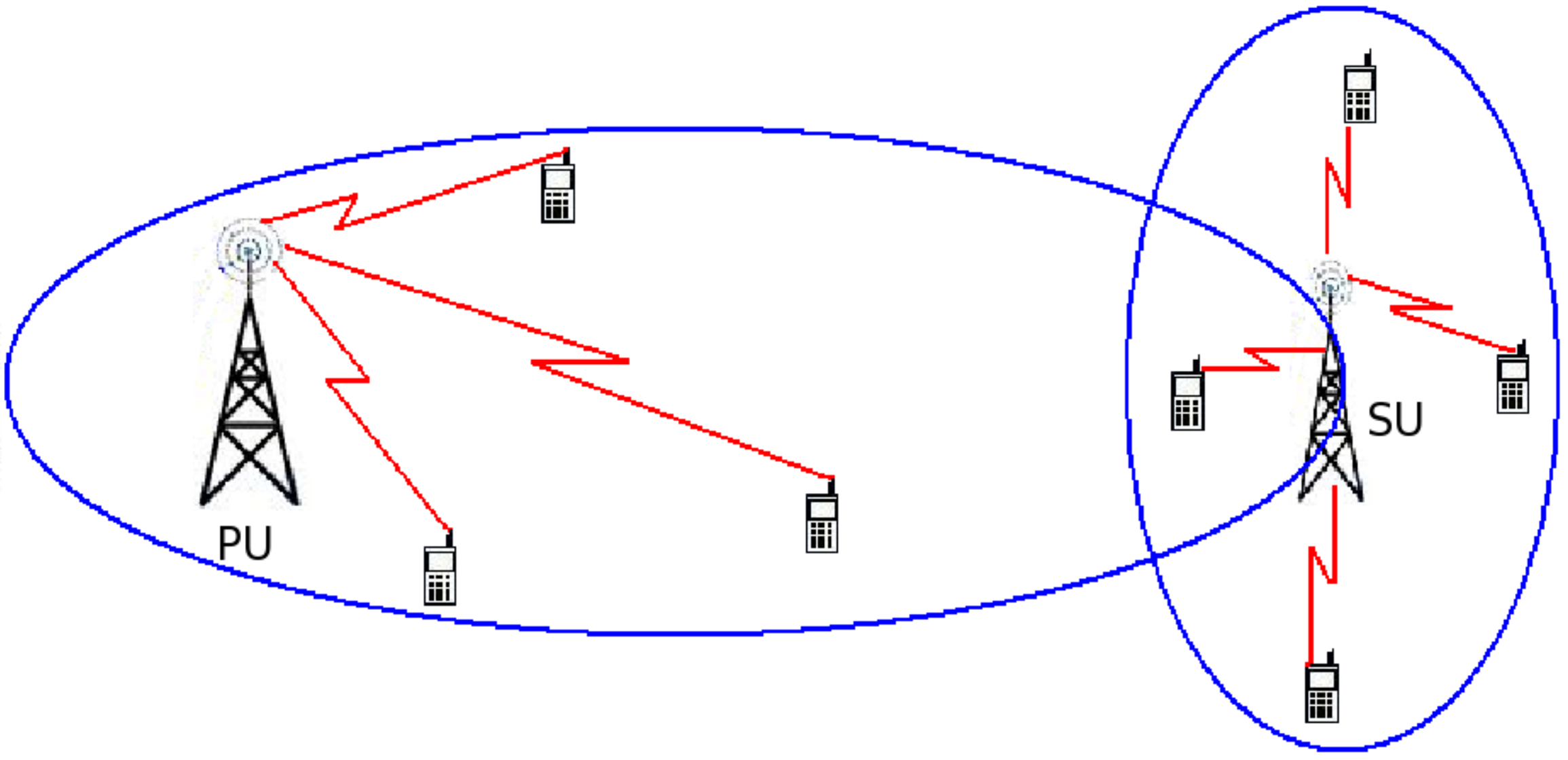}
  \caption{SU is a cognitive base station which is able to sense the activity of a PU base station, and then takes profit of spectrum holes.}\label{CRnetwork}
  \end{center}
\end{figure}
Our main contribution is to consider in this cognitive radio setting, an optimal opportunistic spectrum access (OSA) mechanism that takes into account energy consumption and packets' delay. Many works have focused on the study of optimal sensing and access policies in cognitive radio networks (see \cite{Qing}, \cite{Hua08} and \cite{zheng05}).  All these works have focused on either spectrum sensing or dynamic spectrum sharing.  In \cite{Shi12}, the authors study power control, scheduling and routing problems for maximizing the data rates of SU in multi hop Cognitive Radio Networks. In \cite{Min11}, mobility aspects of the users, both SU and PU, is considered for determining an optimal OSA. In \cite{energy2}, the authors focused on an OSA problem with an energy constraint. The authors have formulated their problem as a POMDP and derived some properties of the optimal sensing control policies. Their control parameter is the duration of sensing used by a SU at each time slot for determining the PU activity. They provided heuristic control policies using a gird-based approximation, myopic policies and static policies which have low complexity but give suboptimal control policies. Authors of \cite{Yu08} incorporate the energy constraint in the design of the optimal policy of sensing and access in cognitive radio network. They formulate the problem also as a POMDP  with a finite horizon and established a threshold structure of the optimal policy for the single channel model.   \cite{Sultan} characterised  the optimal sensing and access for a SU  with an energy queue. It is noteworthy that the impact of the energy consumption or the capacity of cognitive radio to support additional Quality-of-Service (QoS), such as the expected delay, has been somehow ignored in the literature.

{The slow advance of battery technology for mobile devices has motivated both academic and industry to focus on energy efficient transmission in order to create a more satisfactory user experience (see \cite{Saaverdra14}, \cite{Xiong14}, \cite{wu09}, and \cite{Pei11}). Authors of \cite{Pei11} considered that a SU senses sequentially some licensed primary channels before deciding to start transmission. They studied the sensing order and strategy, and the power allocation for a single pair of SU transmitter (SU-Tx and SU-Rx). In \cite{wu09}, the authors  considered an energy-efficient transmission for CR with a delay constraint. Similarly to our work, they  considered an objective function that incorporate a cost for both the consumed energy and the delay. They assumed that the SU senses the licensed channel at the beginning of the slot in order to estimate the activity of the PU as well as the channel power gain.  Hence, the   authors formulated the problem as a discrete-time Markov decision process (MDP) in order to minimize the delay and the energy costs when transmitting the target payload.   The energy of sensing is not considered in the energy consumption that the SU aims to minimize with the delay cost.    

Multiuser opportunistic spectrum access has been investigated extensively in the past years.  In \cite{Saaverdra14}, the authors tried to maximize the energy efficiency in a wireless network with multiple contending nodes using distributed opportunistic scheduling. They tuned the performance of the system using the access probability and the threshold rates.  In \cite{Xiong14}, the authors studied energy efficient opportunistic spectrum access strategies for an orthogonal frequency division multiplexing OFDM-based CR networks with multiple SUs, where each subchannel is exclusively assigned to at most one SU to avoid interference among different SUs.} We have addressed the multiuser problem in our previous paper \cite{oussama}  using game theory and Partially Observable Stochastic Game (POSG), a multiuser version of the POMDP. We illustrated the existence of a tradeoff between large packet delay, partially due to collisions between SUs, and high energy consumption.

Without considering the packets' delay, the SU achieves the best tradeoff between trying to access the licensed primary channel and sleeping to conserve energy.
In fact, it is very important for today multimedia applications on wireless networks, to provide reliable communication while sustaining a certain level of QoS. Moreover, taking into account the transmission delay as well as the energy consumption significantly complicates the optimization problem. The design of such tradeoff lies among several conflicting objectives: gaining immediate access, gaining spectrum occupancy information, conserving energy and minimizing packets' delay. Then, the goal of our paper is to study such energy-QoS tradeoff for determining an optimal OSA mechanism for SUs in a cognitive radio network. The major contributions of our work are:
\begin{itemize}
  \item Instead of improving  existent OSA mechanism, we consider an original more complicated problem that take into account the energy consumption as well as  packet delays. 
    \item The problem is formulated as an infinite horizon POMDP with average criterion. The average criterion is better than the discount or the total criterion as the SU takes often decisions. 
    \item In order to gain insights into the energy-delay constrained OSA problem, we derive structural properties of the value function. We  show that the value function is increasing with the belief and decreasing with  packet delays. These structural results not only give us the fundamental  thresholds design, but also reduce the computational complexity when seeking for the optimal policies.
    \item We show that the SU  maximizes its average reward by adopting a simple threshold policy, and we derive closed-form expressions for these thresholds. The instantaneous reward is defined as a function of the gain (number of bits transmitted) and costs (transmission costs, sensing costs and delay).
    \item Since the SU may use a dedicated channel for its packets, the optimal threshold policy guarantees a bounded delay.
\end{itemize}

\begin{table}[h]
\label{tb:symbols}
\caption{Table of symbols}
\begin{center}
\begin{tabular}{|p{1cm}|p{5cm}|}\hline
$n$ &  wireless channel \\\hline
 $n^*$ &  wireless channel chosen for sensing \\\hline
$s_n(t)$ & state of channel $n$ at time $t$\\\hline
$\alpha_n, \beta_n$ & transition probabilities of primary user on channel n\\\hline
$\lambda_n(t)$ & belief probability of channel $n$ at time $t$  \\\hline
 $l(t)$ & delay of packet at time $t$ \\\hline
 $a(t)$ & action of SU at time $t$ \\\hline
$\theta(t)$ & observation of the SU at time $t$ \\\hline
 $\mu_t$ & strategy  of the SU at time $t$  \\\hline
  $\mu^*$ & the optimal policy  \\\hline
 $\phi$ & the reward \\\hline
 $c_s$ & sensing cost \\\hline
  $P_p$ & transmission cost over a licensed channel \\\hline
    $P_{3G}$ & transmission cost over dedicated access \\\hline
    $f(l)$ & delay penalty \\\hline
    $\pi_n(0)$ & the stationary probability that the licensed channel $n$ is in idle state \\\hline
\end{tabular}
\end{center}
\end{table}

The organization of the paper is as follows. In the next section, we describe the primary and the SU models. Section \ref{section3} presents our partially observable Markov decision process framework. In Section \ref{section4}, we study the existence of an optimal threshold policy for our opportunistic spectrum access with an energy-QoS tradeoff.  In Section \ref{seclearning}, we propose an online learning algorithm which can be used in practice by agents to solve the POMDP. Before concluding the paper and giving
some perspectives, we present, in Section \ref{section5}, some numeric illustrations.

\section{Cognitive radio network model}\label{section2}
We consider a wireless system with $N$ independent channels licensed to PUs. The state of each channel $n \in \{1,\ldots,N\}$ is modeled by a time-homogeneous discrete Markov process $s_n(t)$. The state space is $\{0,1\}$ where $s_n(t)=0$ means that the channel $n$ is free for SU access, and $s_n(t)=1$ means that the channel $n$ is occupied by PUs. The transition probabilities of the channel $n$ is given by the following matrix:
$$
P_n= \left(
\begin{array}{cc}
\alpha_n & 1-\alpha_n\\
\beta_n & 1-\beta_n
\end{array}
\right)
$$
The transition rates evolve as illustrated in Figure \ref{transi}.
The global system state, composed of the $N$ channels, is denoted by the vector $s(t)=[s_1(t),...,s_N(t)]$ and the global state space is $\mathcal{s}=\{0,1\}^N$. The transition probabilities can be determined by the statistics of the primary network traffic and are assumed to be known by SUs. We present in Section \ref{seclearning} how the SU can estimate these transition probabilities on the fly.
\begin{figure}[!h]
\centering
  \includegraphics[width=0.4\textwidth]{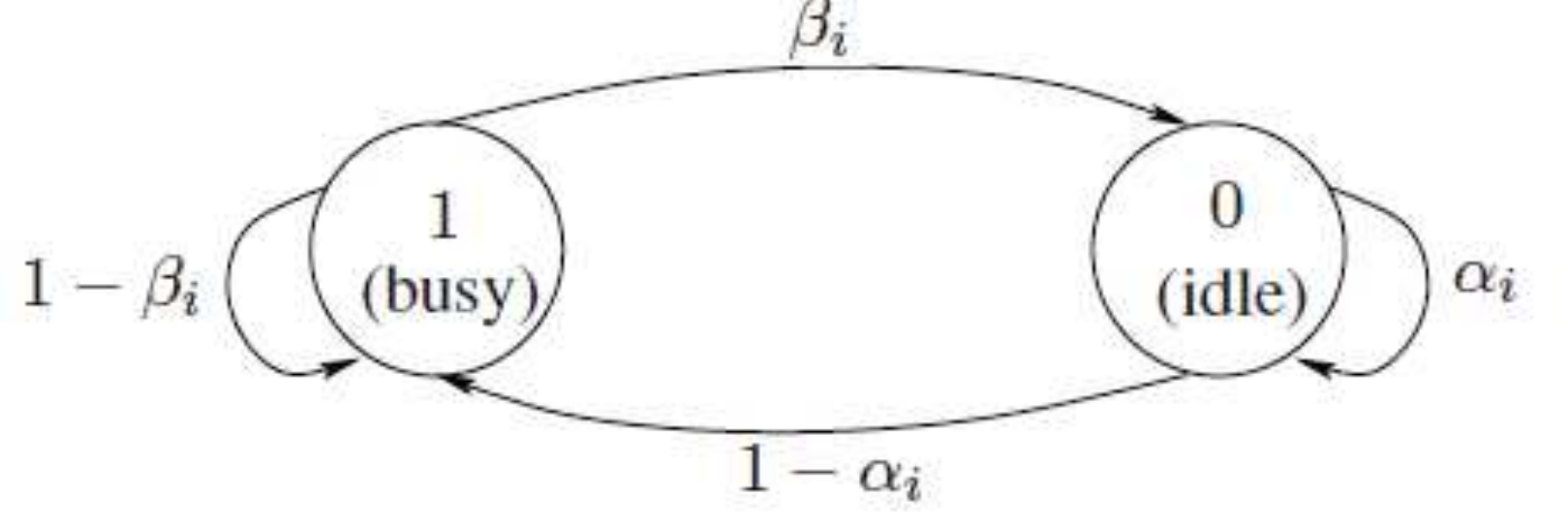}
  \caption{The channel transition probabilities for channel $i$.}\label{transi}
\end{figure}
We consider a SU having the possibility to access to anyone of the $N$ licensed primary channels. The objective of the SU is to detect the channels that are free during a given time slot. However, waiting for idle time slot may include large packet delay and high energy consumption due to the sensing. To overcome this, we consider an OSA that takes into account packet delay, throughput and energy consumption. Since today's wireless networks are highly heterogeneous with mobile devices consisting of multiple wireless network interfaces, we assume that at any time, the SU has access to the network through another technology like 3G. 
The SU will prefer to transmit its packet on a licensed primary channel because it is cheaper than a dedicated communication while the dedicated channel guarantees perfect access.

The goal of each SU is to minimize the expected delay of its packets, accounting for energy, throughput and monetary costs. One of our important contributions is to consider the average transmission delay of a packet in the optimal decision. Indeed, sensing a licensed primary channel has a cost for the SU. We look for an optimal sensing policy which depends on the history of observations and actions.

\section{Partially Observable Markov Decision Process Framework}\label{section3}

Due to partial spectrum sensing, the global system state $s(t)$ cannot be directly observed by a SU. To overcome this difficulty, the SU infers the global system state based on observations that can be summarized in a belief vector:
\begin{eqnarray*}
{\vec{\lambda}(t)}=[\lambda_1(t),..,\lambda_N(t)],
\end{eqnarray*}
where $\lambda_i(t)$ is the conditional probability that the channel $i$ is available at time slot $t$.  


We describe now the POMDP framework considered here.

\subsubsection{State}
The state of the system at time slot $t$ is given by $({\vec{\lambda}(t)},l(t))$ where $l(t)$ is the delay of the packet held by SU at time $t$. The delay of a new packet equals one, and increases by one every time slot, except when the SU transmits the packet.{ We consider a system without buffering , then the SU cannot handle a new packet until he transmits the packet in the system.  In this paper we consider the saturated case in which the SU  has always packets to transmit.}

\subsubsection{Action}
{A SU  chooses an action $a(t)\in\{0,1,2\}$ at each time slot from the following actions:
 \begin{itemize}
 \item $0$: Stay inactive during the time slot,
 \item $1$: Sense a licensed primary channel. If the channel is available transmit, otherwise wait for next time slot,
\item $2$:  Sense a licensed primary channel. If the channel is available transmit, otherwise use the dedicated channel.
\end{itemize}}

%

\subsubsection{Observation and belief}
When the SU decides to sense (i.e. to take action $a(t)\in\{1,2\}$), one channel $n^*$ is determined and the SU observes the channel occupancy state $s_{n^*}(t)\in\{0,1\}$.  Let  $\theta(t)$ be the observation outcome at time $t$, where $\theta(t)=0$ if the sensed channel is idle and  $\theta(t)=1$ otherwise. The user updates the belief vector $\vec{\lambda}(t)$ after the observation outcome. For each channel $n$, the conditional probability $\lambda_n(t+1):=\mbox{Pr}(s_n(t+1)=0|a(t),\theta(t))$ is  defined as follows:
{\small{
\begin{equation}\label{upomega}
\lambda_n(t+1)
=\left\{
\begin{array}{ll}
\beta_n+(\alpha_n-\beta_n)\lambda_n(t) & \mbox{if }a(t) = 0 \mbox{ or }  n\neq n^*, \\
 \alpha_n& \mbox{if} \quad a(t)\neq 0,\mbox{ }\theta(t)=0 \\&\mbox{ and }  n=n^*,\\
 \beta_n & \mbox{if} \quad a(t)\neq 0,\mbox{ } \theta(t)=1\\&\mbox{ and }  n=n^*.\\
\end{array}
\right.
\end{equation}}}

\subsubsection{Channel choice policy}
At each time slot $t$, based on its belief vector $\vec{\lambda}(t)$, the SU chooses a channel $n^*\in N$ to be sensed. There exists several channel choice policies in the literature such as deterministic, randomized and periodic (see \cite{EkHoss}). In this paper, we consider that the SU senses the channel which has the highest probability to be idle, i.e. $n^*:=\arg\max_{n}(\lambda_n(t))$.

\subsubsection{Policies}
The strategy of the SU is defined by the probability of choosing a given action depending on the system state. We define a sensing and access policy $\mu$ as a vector $[\mu_1,\mu_2,\ldots]$ where $\mu_t$ is a mapping from a state $(\vec{\lambda}(t),l(t))$ to an action $a(t)$. The set of policies is denoted by $\Gamma$. A stationary policy is a mapping that specifies for each state, independently of the time slot $t$, an action to be chosen. In the next section, we show that our POMDP problem has an optimal stationary policy which allows us to restrict our problem to the set of  stationary policies.

\subsubsection{Reward and costs}
\begin{itemize}
\item Reward:  Let $\Phi$ be the reward representing the number of delivered bits when the SU transmits its packet.
\item Sensing costs :  Let $c_s$ be the energy cost function for sensing a licensed channel. 
\item Transmission cost: The PU and the service provider for the dedicated access, charge a price for each packet transmitted. Those prices are respectively $P_p$ for a transmission over a primary channel and $P_{3G}$ for a transmission over the dedicated channel. 
\item Delay penalty: In order to model the impact of the delay, we introduce an additional cost when a packet is not transmitted. This cost depends on the current delay $l$ of the packet and it is defined by the function $f(l)$. This function is assumed to be increasing with $l$ in order to increase the incentive of transmitting the packet when it becomes delayed.
\end{itemize}

We have expressed all the rewards and cost in the same unit in order to achieve a tradeoff between energy and delay.

At time slot $t$, the instantaneous reward $r_t((\vec{\lambda}(t), l(t)),a(t))$ of a SU depends on the system state $(\vec{\lambda}(t),l(t))$ and the action $a(t)$, and is expressed by:
\begin{equation}
r_t=\left\{
\begin{array}{cl}
-f(l(t)),& \mbox{ if } a(t)=0,\\
\Phi- c_s-P_p-f(l(t)) & \mbox{ if } a(t)\geq 1 \mbox{ and } \theta(t)=0,\\
\Phi- c_s-P_{3G}-f(l(t)), & \mbox{ if } a(t)=2  \mbox{ and } \theta(t)=1.\\
- c_s-f(l(t)), & \mbox{ if } a(t)=1 \mbox{ and } \theta(t)=1.
\end{array}
\right.
\end{equation}

The problem faced by the SU consists of finding the policy $\mu$ that maximizes its expected average reward defined by:
$$\bar{R}(\mu)=\mathop {\lim }\limits_{T \to \infty}\frac{1}{T}\E_{\mu}\left(\sum_{t=1}^T r_t((\vec{\lambda}(t), l(t)),a(t))|{\vec{\lambda}(0)}\right),$$
\normalsize
where $\vec{\lambda}(0)$ is the initial belief vector. Thus, our objective is to find an optimal sensing policy $\mu^*$ that maximizes the average reward $\bar{R}(\mu)$, i.e.:
\begin{equation}\label{avpol}
    \mu^*=arg\max_{\mu \in \Gamma}\mathop {\lim }\limits_{T \to \infty}\frac{1}{T}\E_{\mu}\left(\sum_{t=1}^T r_t((\vec{\lambda}(t),l(t)),a(t))|{\vec{\lambda}(0)}\right).
\end{equation}

In some particular MDP and POMDP problems, we are able to determine an optimal policy in a smaller set reduced to stationary policies. 
Since we have a POMDP with a discrete state and action space, our POMDP framework can be transformed into a MDP problem over the belief state space \cite{Kaelbling98}. Then, the proof of the existence of an average optimal stationary policy results from Theorems  8.10.9 and 8.10.7 of \cite{Putterman}.

\begin{remark}
Let $\pi^{\mu^*}$ be the stationary distribution of the Markov chain $({\vec{\lambda}(t)},l(t))$ when SU uses the optimal stationary policy $\mu^*$.  Applying Little's result, the expected delay $E(D)$ is given by $E(D)=1+\frac{1}{thp}$, where $thp$ is the average throughput which is defined  as the expected number of departures per slot.  The throughput can be computed as follows
$$
thp= \sum_{\vec{\lambda}}   \pi^{\mu^*}(\vec{\lambda},1)
$$
Hence  a delay constraint may be implicitly controlled  by the penalty  $f(l)$.
\end{remark}

Given this result, we can restrict our problem to the set $\Gamma_S$ of stationary policies. Then, for the remainder of this paper, we omit the time index $t$ and we look for an optimal sensing policy which is a mapping between a system state $(\vec{\lambda},l)$ to an action $a$, independently of the time slot $t$. Now, we make a first analysis of the value function of the POMDP.

We denote by $\Omega^{ns}(\vec{\lambda}|\theta)$ the function that updates the belief vector $\vec{\lambda}$ when the user chooses to be inactive in the current slot, i.e. the SU takes action 0. The function $\Omega^{s}(\vec{\lambda}|\theta)$  updates the belief vector $\vec{\lambda}$ when the SU senses a licensed primary channel in the current slot and observes $\theta$, i.e. the SU takes the action 1 or 2.

The value function is denoted $V(\vec{\lambda},l)$. Let us denote by $Q_a(\vec{\lambda},l)$ the action-value function taking the action $a$ in the current slot when the information state is $(\vec{\lambda},l)$. Therefore, the value function is expressed by
\begin{equation}\label{vf}
g_u+V(\vec{\lambda},l)=\max_{a\in \mathcal{A}}Q_a(\vec{\lambda},l),
\end{equation}
where $g_u$ is a constant, and the optimal action is given by
\begin{equation}\label{opaction}
a^*(\vec{\lambda},l)=\arg\max_{a\in\mathcal{A}}Q_a(\vec{\lambda},l).
\end{equation}

We determine the action-value function for each different action 0, 1 and 2. When the SU decides to wait, i.e. to take the action $a=0$, we have:
\begin{eqnarray}
 Q_0(\vec{\lambda},l)=-f(l)+V(\Omega^{ns}(\vec{\lambda}|\theta=0),l+1).
\label{maxexp0}
\end{eqnarray}
When the SU chooses to sense the channel $n^*$ and decides to wait for the next time slot if the channel $n^*$ is busy ($a=1$), we have:
\begin{eqnarray}
 Q_{1}(\vec{\lambda},l)&=&-c_s+\lambda_{n^*}(\Phi-P_p+V(\Omega^{s}(\vec{\lambda}|\theta=0),1))\\ \nonumber &&+(1-\lambda_{n^*})(-f(l)+V(\Omega^{s}(\vec{\lambda}|\theta=1),l+1)).
\label{maxexp1}
\end{eqnarray}
\normalsize
When the SU chooses to sense the channel $n^*$ and to transmit using the dedicated channel if the channel $n^*$ is busy ($a=2$), we have:
\begin{eqnarray}
 Q_{2}(\vec{\lambda},l)&=&\Phi-c_s+\lambda_{n^*}(-P_p+V(\Omega^{s}(\vec{\lambda}|\theta=0),1))\\ \nonumber &&+(1-\lambda_{n^*})(-P_{3G}+V(\Omega^{s}(\vec{\lambda}|\theta=1),1)).
\label{maxexp2}
\end{eqnarray}
\normalsize

We take the assumption that there exists a packet delay $l^*$ such that the SU transmits its packet using the dedicated channel if the observation is $\theta=1$. In fact, this assumption is somehow realistic as the user has no interest to keep the packet in its buffer indefinitely.

We denote by $\alpha_n$ and $\beta_n$ the transition rates of the licensed primary channel $n$, and $\lambda_n$ the belief of the SU. We consider  that $\alpha_n\geq\beta_n$. When $\alpha_n\leq\beta_n$, the analysis is similar and the results are unchanged.
Let us focus on the belief update function $\Omega^{ns}$.
\begin{lemma}\label{beliefupdate}
We have the following properties of the belief update function $\Omega^{ns}$.
\begin{enumerate}
\item The update function $\Omega^{ns}(\lambda_n|\theta)$ is increasing with belief $\lambda_n$.
\item We have the following equivalence:
$$\Omega^{ns}(\lambda_n|\theta) \geq \lambda_n \quad \Leftrightarrow \quad \lambda_n\leq\pi_n(0),$$
and
$$\Omega^{ns}(\lambda_n|\theta) \leq \lambda_n \quad \Leftrightarrow \quad \lambda_n\geq\pi_n(0),$$
where $\pi_n(0)=\frac{\beta_n}{1-\alpha_n+\beta_n}$ is the stationary probability that the licensed primary channel $n$ is idle. Figure \ref{belevo} depicts the belief evolution depending on the packet delay.
\end{enumerate}
\begin{figure}[!h]
\centering
  \includegraphics[width=0.55\textwidth]{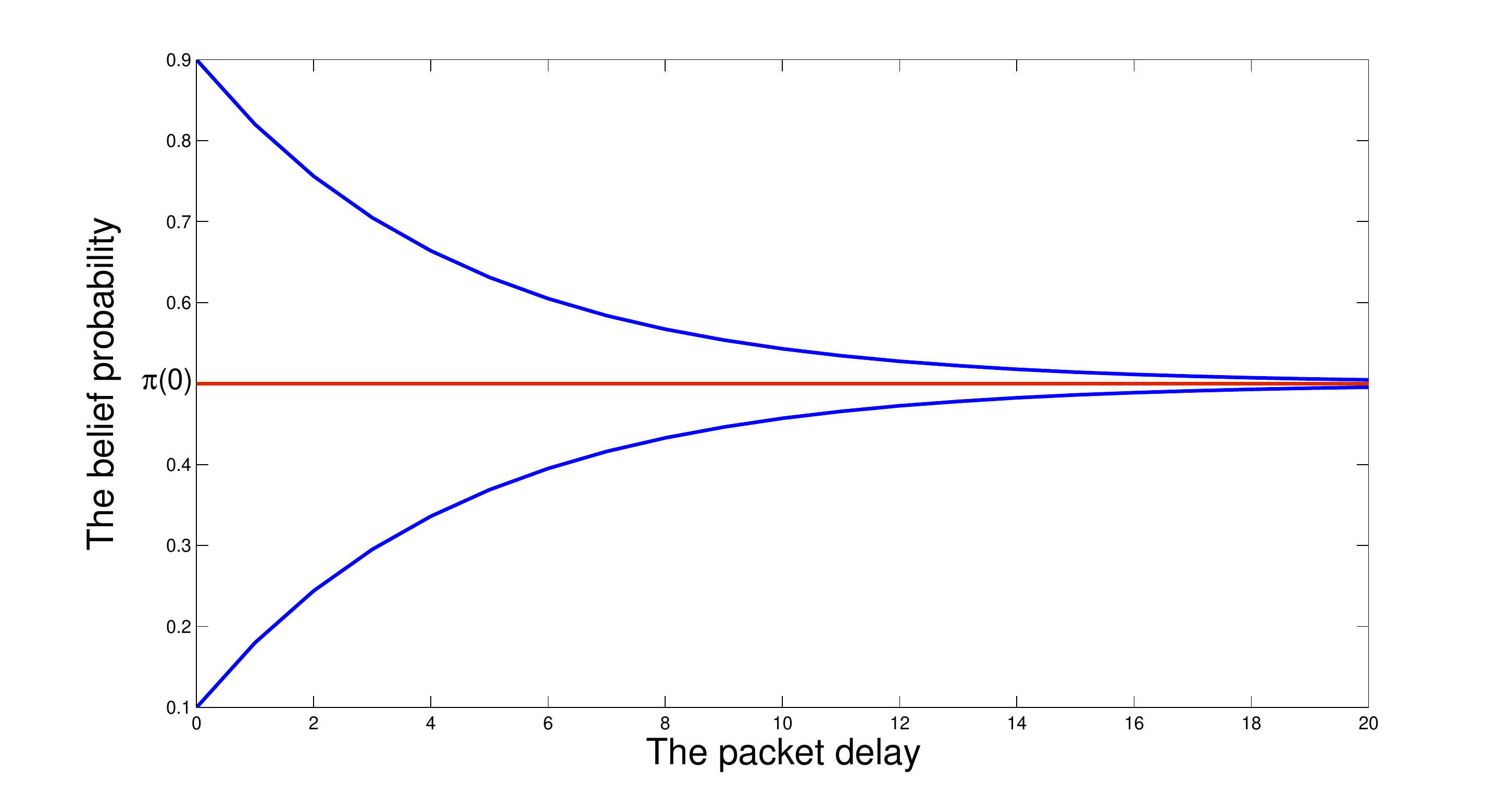}
  \caption{The belief update function $\Omega^{ns}$ with respect to the packet delay.}\label{belevo}
\end{figure}
\end{lemma}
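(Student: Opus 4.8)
The plan is to reduce everything to the explicit form of the update. When the SU is inactive, or when channel $n$ is not the sensed channel $n^*$, no observation about channel $n$ is collected, so by the first line of the update rule~(\ref{upomega}) the map $\Omega^{ns}$ is independent of $\theta$ and is simply the one‑dimensional affine recursion
\[
\Omega^{ns}(\lambda_n\,|\,\theta)=\beta_n+(\alpha_n-\beta_n)\,\lambda_n .
\]
Both assertions of the lemma then follow from elementary properties of this affine map.

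For part~1, I would just note that the slope of $\Omega^{ns}(\cdot\,|\,\theta)$ with respect to $\lambda_n$ equals $\alpha_n-\beta_n$, which is nonnegative under the standing assumption $\alpha_n\ge\beta_n$ (the symmetric case $\alpha_n\le\beta_n$, already noted to be analogous, gives a nonnegative slope after the corresponding relabeling). Hence $\Omega^{ns}(\lambda_n\,|\,\theta)$ is nondecreasing in $\lambda_n$, which is the claimed monotonicity; in fact the slope lies in $[0,1)$, so the map is a contraction.

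For part~2, I would first identify $\pi_n(0)$ as the fixed point of this map: solving $\lambda=\beta_n+(\alpha_n-\beta_n)\lambda$ gives $\lambda(1-\alpha_n+\beta_n)=\beta_n$, i.e. $\lambda=\beta_n/(1-\alpha_n+\beta_n)=\pi_n(0)$, which is also exactly the solution of the balance equations $\pi P_n=\pi$, $\pi(0)+\pi(1)=1$, confirming that $\pi_n(0)$ is the stationary probability of the idle state. Then I would write
\[
\Omega^{ns}(\lambda_n\,|\,\theta)-\lambda_n=\beta_n-(1-\alpha_n+\beta_n)\,\lambda_n ,
\]
and observe that $1-\alpha_n+\beta_n>0$ (it equals $\beta_n$ when $\alpha_n=1$ and otherwise exceeds it, vanishing only in the degenerate absorbing case $\alpha_n=1,\ \beta_n=0$, which can be excluded or handled trivially). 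Dividing the inequality $\beta_n-(1-\alpha_n+\beta_n)\lambda_n\ge 0$ by this strictly positive quantity preserves its direction and yields $\lambda_n\le\pi_n(0)$; the reverse implication and the two ``$\le$'' statements are obtained identically by reversing the inequality. Geometrically this says the monotone affine contraction $\Omega^{ns}$ moves every belief toward its fixed point $\pi_n(0)$, which is precisely the behavior depicted in Figure~\ref{belevo}.

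There is essentially no genuine obstacle here: the lemma is a direct computation. The only points that call for a little care are verifying that the denominator $1-\alpha_n+\beta_n$ is strictly positive, so that dividing by it does not flip the inequality, and checking that the fixed point of the deterministic belief update coincides with the stationary distribution of the two‑state chain $P_n$; both are quick to dispatch.
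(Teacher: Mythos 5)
Your proof is correct. Part 1 coincides with the paper's argument (the map is affine with slope $\alpha_n-\beta_n\ge 0$). For part 2, however, you take a genuinely different and in fact tighter route: you compute $\Omega^{ns}(\lambda_n)-\lambda_n=\beta_n-(1-\alpha_n+\beta_n)\lambda_n$ and read off the sign directly after checking $1-\alpha_n+\beta_n>0$, which yields the stated equivalence in both directions and for every $\lambda_n\in[0,1]$. The paper instead argues by ``induction on the belief'': it verifies $\Omega^{ns}(\beta)\ge\beta$ and then propagates the inequality through repeated application of $\Omega^{ns}$, using monotonicity of the affine map. That argument only establishes the one implication $\lambda\le\pi(0)\Rightarrow\Omega^{ns}(\lambda)\ge\lambda$, and only for beliefs lying on the forward orbit $\{(\Omega^{ns})^k(\beta)\}_{k\ge 0}$ (which happens to be the set of beliefs actually reachable in the POMDP after a busy observation, so it suffices for the paper's later use, but it is not the full equivalence as stated). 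Your direct sign computation is both more elementary and more complete; your additional observations --- that $\pi_n(0)$ is the unique fixed point of the affine contraction and agrees with the stationary distribution of $P_n$, and that the degenerate case $\alpha_n=1,\ \beta_n=0$ must be set aside so the division is legitimate --- are points the paper glosses over.
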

\begin{proof}
See Appendix \ref{apbeliefupdate}.
\end{proof}

It has been shown in \cite{sondik} that the value function for a POMDP over a finite time horizon is piecewise linear and convex with respect to the belief vector. In Proposition \ref{pwlc}, we show that the value function for our POMDP problem over an infinite horizon with the average criterion, has also this property.
\begin{proposition}\label{pwlc}
The value function $V(\vec{\lambda},l)$  is piecewise linear and convex with respect to the belief vector $\vec{\lambda}$.
\end{proposition}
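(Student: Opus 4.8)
The plan is to adapt the classical finite-horizon argument of Sondik \cite{sondik} and then pass to the infinite-horizon average-reward limit. First I would introduce the $n$-stage value functions produced by value iteration, $V_0\equiv 0$ and $V_{n+1}(\vec\lambda,l)=\max_{a\in\mathcal{A}}Q^{(n)}_a(\vec\lambda,l)$, where $Q^{(n)}_a$ is obtained from (\ref{maxexp0})--(\ref{maxexp2}) by replacing $V$ with $V_n$. The first goal is to prove, by induction on $n$, that for every fixed delay $l$ the map $\vec\lambda\mapsto V_n(\vec\lambda,l)$ is piecewise linear and convex, i.e.\ a pointwise maximum of finitely many affine functions of $\vec\lambda$ (equivalently, linear functions of the underlying belief over $\{0,1\}^N$). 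The assumption made just before Lemma \ref{beliefupdate}, that there is a delay threshold $l^*$ beyond which the packet is always offloaded on the dedicated channel, makes the reachable set of delay values finite, so this ``finitely many pieces'' statement is meaningful uniformly in $l$.

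For the inductive step the ingredients are: (i) by (\ref{upomega}) the belief-update maps $\Omega^{ns}(\cdot\,|\theta)$ and $\Omega^{s}(\cdot\,|\theta)$ are affine in $\vec\lambda$ --- after a sensing action the sensed coordinate is reset to the constant $\alpha_{n^*}$ (if $\theta=0$) or $\beta_{n^*}$ (if $\theta=1$), while every other coordinate is updated by the affine rule $\lambda_n\mapsto\beta_n+(\alpha_n-\beta_n)\lambda_n$, independently of $\lambda_{n^*}$; (ii) a piecewise-linear convex function precomposed with an affine map is again piecewise-linear convex, and a pointwise maximum of finitely many such functions is piecewise-linear convex; (iii) in $Q^{(n)}_1$ and $Q^{(n)}_2$ the continuation terms are weighted by $\lambda_{n^*}$ and $1-\lambda_{n^*}$, which are nonnegative on the belief domain, so one absorbs these weights into the defining linear pieces of $V_n$ --- the standard POMDP ``belief-weighted $\alpha$-vector'' manipulation --- which is cleanest when $V_n$ is regarded as a function of the full belief over $\{0,1\}^N$, on which the dynamic-programming operator acts linearly, although the belief is always of product form. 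Finally, since the sensed channel is $n^*=\arg\max_n\lambda_n$, the belief space decomposes into finitely many polyhedra on each of which $n^*$ is constant; on each such polyhedron the above shows every $Q^{(n)}_a$ is piecewise-linear convex, and since the candidate expressions are continuous and match on the shared boundaries, $V_{n+1}(\cdot,l)=\max_aQ^{(n)}_a(\cdot,l)$ is piecewise-linear convex overall. Lemma \ref{beliefupdate} is convenient here for tracking the monotonicity of the affine updates.

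The second and harder part is to transfer the property to the infinite-horizon average-reward value function $V$ of (\ref{vf}). I would invoke the convergence guaranteed by Theorems 8.10.9 and 8.10.7 of \cite{Putterman} (already used to obtain a stationary optimal policy): up to an additive constant, $V$ is the limit of the relative value functions $V_n(\cdot,l)-V_n(\vec\lambda_0,l_0)$ for a fixed reference state $(\vec\lambda_0,l_0)$ --- or, alternatively, the vanishing-discount limit of the discounted value functions, each of which is piecewise-linear convex by the finite-horizon argument. Convexity passes to the limit immediately. The real obstacle is to show that the limit retains a \emph{finite} number of linear pieces, since a pointwise limit of piecewise-linear convex functions is in general only convex. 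I would control this by exhibiting an a priori bound on the $\alpha$-vectors needed at each stage --- using the finiteness of $\mathcal{A}$, of the reachable delay values (thanks to $l^*$), and of the polyhedral cells induced by the $\arg\max$ channel-selection rule --- so that only finitely many distinct linear pieces can occur along the whole sequence, whence the limit is a maximum over a finite subcollection of them. This uniform-bound step is where essentially all the difficulty lies; everything preceding it is the routine finite-horizon induction.
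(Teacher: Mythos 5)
Your part one --- the finite-horizon induction showing each $V_n(\cdot,l)$ is piecewise linear and convex --- is the standard Sondik argument and is sound; your handling of the affine belief updates from (\ref{upomega}) and of the polyhedral cells induced by the $\arg\max$ channel choice is, if anything, more explicit than the paper's. The genuine gap is in part two, and although you have correctly located the difficulty, the repair you propose does not work. The number of distinct $\Upsilon$-vectors produced by value iteration is \emph{not} bounded by the finiteness of $\mathcal{A}$, of the reachable delays, and of the $\arg\max$ cells: each iteration precomposes the existing affine pieces with the belief-update maps and takes pointwise maxima, so new, previously unseen affine pieces are created at every stage, and in general their number grows without bound in the horizon. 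Accordingly, the infinite-horizon value function of a POMDP is convex and is a uniform limit of PWLC functions, but it is piecewise linear with \emph{finitely} many pieces only under extra structure (Sondik's finitely transient policies, for instance). As outlined, your limit step cannot be completed.

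The paper avoids taking any limit by running the induction on the \emph{delay} $l$ rather than on the value-iteration horizon. The standing assumption that at the maximal delay $l^*$ the SU always takes action $2$ provides the exact anchor $V(\vec\lambda,l^*)=Q_2(\vec\lambda,l^*)-g_u$, and since a transmission resets the belief to the fixed points $\alpha$ or $\beta$, only the scalars $V(\alpha,1)$ and $V(\beta,1)$ enter this expression --- not the unknown function $V(\cdot,1)$ --- so $V(\cdot,l^*)$ is explicitly affine in the belief. Because every non-transmitting action strictly increments $l$ while $Q_2$ is independent of $l$, a backward induction from $l^*$ down to $1$ on the optimality equation (\ref{vf}) terminates in finitely many steps, each preserving piecewise linearity and convexity by exactly the $\Upsilon$-vector manipulations you describe in part one. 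Replacing your horizon induction and limiting argument by this delay induction is the missing idea; with it, no appeal to \cite{Putterman} beyond the existence of a solution to (\ref{vf}) is needed.
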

\begin{proof}
See Appendix \ref{appwlc}.
\end{proof}


Note that monotonicity results help us for establishing the structure of the optimal policies (see \cite{lovejoy} for an example) and provide insights into the underlying problem. The following propositions states monotonicity results of the value function with respect to the packet delay.
\begin{proposition}\label{monotonicityl}
For each belief vector $\vec{\lambda}$, the value function is monotonically decreasing with the packet delay $l$, i.e. $V(\vec{\lambda},l)\leq V(\vec{\lambda},l')$ for $l\geq l'$.
\end{proposition}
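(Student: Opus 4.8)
The plan is to prove the slightly stronger statement that, for \emph{every} fixed belief vector $\vec{\lambda}$, the relative value function is non-increasing in $l$, by an induction carried along value iteration followed by a limit passage. Set $V_0\equiv 0$ and, for $k\geq 0$, let $V_{k+1}(\vec{\lambda},l)=\max_{a\in\mathcal{A}}Q_a^{(k)}(\vec{\lambda},l)-c_k$, where $Q_a^{(k)}$ is obtained from $(\ref{maxexp0})$--$(\ref{maxexp2})$ by substituting $V_k$ for $V$, and $c_k=\max_{a\in\mathcal{A}}Q_a^{(k)}(\vec{\lambda}_0,l_0)$ is a normalising constant evaluated at an arbitrary fixed reference state $(\vec{\lambda}_0,l_0)$. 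Under the unichain structure already used to obtain the average-reward optimality equation $(\ref{vf})$ (Theorems~8.10.7 and 8.10.9 of \cite{Putterman}), $V_k$ converges pointwise to $V$ up to an additive constant; since a pointwise limit of non-increasing functions is non-increasing and the constants $c_k$ do not depend on $(\vec{\lambda},l)$, it suffices to show that each $V_k(\vec{\lambda},\cdot)$ is non-increasing.

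The base case holds because $V_0$ is constant. For the induction step I would assume $V_k(\vec{\mu},\cdot)$ is non-increasing for every belief vector $\vec{\mu}$, fix $\vec{\lambda}$ and $l_2\geq l_1$, and verify that each action-value function is non-increasing in $l$. For $a=0$, $(\ref{maxexp0})$ gives $Q_0^{(k)}(\vec{\lambda},l)=-f(l)+V_k(\Omega^{ns}(\vec{\lambda}\,|\,\theta=0),l+1)$: the term $-f(l)$ is non-increasing because $f$ is increasing by the assumption on the delay penalty, and $V_k(\Omega^{ns}(\vec{\lambda}\,|\,\theta=0),l_2+1)\leq V_k(\Omega^{ns}(\vec{\lambda}\,|\,\theta=0),l_1+1)$ by the induction hypothesis (since $l_2+1\geq l_1+1$), so their sum is non-increasing. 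For $a=1$, the only $l$-dependent part of $(\ref{maxexp1})$ is $(1-\lambda_{n^*})\bigl(-f(l)+V_k(\Omega^{s}(\vec{\lambda}\,|\,\theta=1),l+1)\bigr)$, which is non-increasing by the same reasoning weighted by the nonnegative factor $1-\lambda_{n^*}$. For $a=2$, $(\ref{maxexp2})$ does not depend on $l$ at all, the packet delay being reset to $1$ in both outcomes, so $Q_2^{(k)}(\vec{\lambda},\cdot)$ is trivially non-increasing (and it stays so even if one adds back the delay penalty $-f(l)$ appearing in the instantaneous reward for $a=2$).

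The pointwise maximum of non-increasing functions is non-increasing, hence so is $V_{k+1}(\vec{\lambda},\cdot)=\max_a Q_a^{(k)}(\vec{\lambda},\cdot)-c_k$; this closes the induction, and letting $k\to\infty$ yields $V(\vec{\lambda},l)\leq V(\vec{\lambda},l')$ whenever $l\geq l'$. A self-contained alternative would be a sample-path coupling: since the belief recursion $(\ref{upomega})$ does not involve the current delay and whether or not the packet is transmitted is determined solely by the chosen action and the observation $\theta$, running the controlled chain from $(\vec{\lambda},l)$ and from $(\vec{\lambda},l')$ under a common realisation of the channel states and the same policy keeps the two belief trajectories identical and the two delay trajectories ordered, so each per-slot reward is ordered through its $-f(\cdot)$ term and averaging gives the inequality. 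In either argument the step I expect to be the main obstacle is the transition from the pre-limit object (the value iterate, or equivalently the discounted value function as $\beta\uparrow 1$) to the average-reward relative value function, which is exactly where the unichain/aperiodicity hypotheses behind Theorems~8.10.7 and 8.10.9 of \cite{Putterman} are needed.
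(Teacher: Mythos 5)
Your argument is correct in substance but follows a genuinely different route from the paper's. The paper works directly with the solution $V$ of the average-reward optimality equation~(\ref{vf}) and performs a \emph{backward induction on the delay $l$ itself}, anchored at the maximal delay $l^*$: since the standing assumption is that action $2$ is optimal for every belief at delay $l^*$, and $Q_2$ does not depend on $l$, one gets $V(\vec{\lambda},l^*)=Q_2(\vec{\lambda},l^*-1)-g_u\leq V(\vec{\lambda},l^*-1)$ as the base case, and then descends from $l^*$ using exactly the per-action comparisons you wrote (monotonicity of $f$ plus the inductive ordering at delay $l+1$ versus $l+2$). You instead induct \emph{forward on the value-iteration index $k$} and pass to the limit. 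The trade-off is clear: your scheme does not need the finite $l^*$ assumption to make the induction well-founded (each $V_k$ is shown monotone for all $l$ at once), whereas the paper's backward induction would have no anchor without it; on the other hand, you must justify that relative value iteration converges (up to constants) to the average-reward relative value function, which on an uncountable belief space with unbounded delay is a nontrivial step that unichain-type hypotheses alone do not automatically deliver — you correctly flag this as the weak point, and it is the one place where your proof carries a technical burden the paper's proof avoids (the paper instead carries the burden of the $l^*$ assumption, which it states explicitly in Section~\ref{section3}). Your per-action monotonicity verifications, including the observation that $Q_2$ is independent of $l$ (and remains non-increasing even if the $-f(l)$ term from the instantaneous reward is reinstated), match the paper's computations; the sample-path coupling you sketch at the end is a third valid alternative that neither you nor the paper develops in detail.
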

\begin{proof}
See Appendix \ref{apmonotonicityl}.
\end{proof}

This result is intuitive because for the same belief $\vec{\lambda}$ and for a given packet delay, the maximum expected remaining reward that can be accrued is lower than the one the SU can get with a smaller packet delay. 

\section{Optimal Threshold policy for single channel}\label{section4}
 The monotonicity with respect to the belief vector depends on the order relation over the belief set and also on the monotonicity of the belief update functions $\Omega^s(\vec{\lambda}|\theta=0)$ and $\Omega^s(\vec{\lambda}|\theta=1)$ depending on the belief vector. Thus, we can determine the structure of the optimal policy only for the single primary channel case.
\begin{proposition}\label{monotonicityb}
Denote $\lambda$ the belief probability of the licensed primary channel. The value function is monotonically increasing with the belief vector ${\lambda}$, i.e. $V({\lambda},l)\geq V({\lambda}',l)$ for ${\lambda}\geq{\lambda}'$. \end{proposition}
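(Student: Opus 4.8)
My plan is to propagate monotonicity in $\lambda$ through the dynamic programming operator associated with (\ref{vf}) and then pass to the limit in a successive-approximation scheme. Concretely, I would set $V^{(0)}\equiv 0$, let $Q_a^{(k)}$ be obtained from (\ref{maxexp0})--(\ref{maxexp2}) by replacing $V$ with $V^{(k)}$, and define $V^{(k+1)}(\lambda,l)=\max_{a\in\mathcal{A}}Q_a^{(k)}(\lambda,l)-g_u^{(k)}$ with $g_u^{(k)}$ the recentring constant that keeps the iteration bounded. The average-reward dynamic-programming theory invoked after (\ref{avpol}) guarantees that, after recentring, the $V^{(k)}$ converge to a solution $V$ of (\ref{vf}); since adding a constant does not affect monotonicity in $\lambda$, it suffices to prove by induction on $k$ that each $V^{(k)}(\cdot,l)$ is nondecreasing. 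I would run this induction jointly with the monotonicity in $l$ that is established for every iterate in the proof of Proposition~\ref{monotonicityl}, so the inductive hypothesis is that $V^{(k)}(\cdot,l)$ is nondecreasing for every $l$ and $V^{(k)}(\lambda,\cdot)$ is decreasing for every $\lambda$.

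The feature that makes the single-channel case work is the form of the belief updates in (\ref{upomega}): when $N=1$ the post-sensing beliefs $\Omega^{s}(\lambda\,|\,\theta=0)=\alpha$ and $\Omega^{s}(\lambda\,|\,\theta=1)=\beta$ do not depend on the prior $\lambda$, while $\Omega^{ns}(\lambda\,|\,\theta=0)=\beta+(\alpha-\beta)\lambda$ is increasing in $\lambda$ by Lemma~\ref{beliefupdate}. Substituting these into (\ref{maxexp0})--(\ref{maxexp2}), $Q_0^{(k)}(\lambda,l)=-f(l)+V^{(k)}(\beta+(\alpha-\beta)\lambda,\,l+1)$ is nondecreasing in $\lambda$ as a composition of the increasing map $\lambda\mapsto\Omega^{ns}(\lambda\,|\,0)$ with $V^{(k)}$; and, collecting the $\lambda$-terms, $Q_1^{(k)}(\lambda,l)$ and $Q_2^{(k)}(\lambda,l)$ turn out to be \emph{affine} in $\lambda$, with respective slopes
\[
\Phi-P_p+f(l)+V^{(k)}(\alpha,1)-V^{(k)}(\beta,l+1)\qquad\mbox{and}\qquad P_{3G}-P_p+V^{(k)}(\alpha,1)-V^{(k)}(\beta,1).
\]
I would then sign both slopes using the inductive hypothesis together with $\alpha\ge\beta$ (which gives $V^{(k)}(\alpha,1)\ge V^{(k)}(\beta,1)$), decreasingness in $l$ (which gives $V^{(k)}(\beta,1)\ge V^{(k)}(\beta,l+1)$), the modelling assumption $P_{3G}\ge P_p$ (the dedicated access is the costlier one), and the natural assumption $\Phi\ge P_p$ (an opportunistic transmission is not a net loss): both slopes are then nonnegative, so $Q_1^{(k)}$ and $Q_2^{(k)}$ are nondecreasing in $\lambda$ as well. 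Since a pointwise maximum of nondecreasing functions is nondecreasing, $V^{(k+1)}(\cdot,l)$ is nondecreasing, which closes the induction, and letting $k\to\infty$ gives $V(\lambda,l)\ge V(\lambda',l)$ for $\lambda\ge\lambda'$.

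The step I expect to be the real obstacle is precisely signing the two slopes above, and it is exactly there that the single-channel restriction is essential. For $N>1$ one senses $n^{*}=\arg\max_n\lambda_n$, so the post-sensing beliefs $\Omega^{s}(\vec\lambda\,|\,\theta)$ genuinely depend on $\vec\lambda$ (and on which coordinate attains the maximum); the action-value functions are then no longer affine in the direction of interest, and there is no multichannel analogue of the clean comparison $V^{(k)}(\alpha,1)\ge V^{(k)}(\beta,l+1)$ that made the slopes nonnegative. This is the obstruction alluded to at the start of Section~\ref{section4}, and it is why the proposition is stated for a single primary channel. A secondary, routine point is to check that the successive-approximation scheme converges, so that its limit solves (\ref{vf}), and that the recentring constants $g_u^{(k)}$ are irrelevant to monotonicity in $\lambda$; this follows from the finiteness of the action set, the boundedness of the packet delay guaranteed by the dedicated-channel assumption, and the results of \cite{Putterman} already cited.
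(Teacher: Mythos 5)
Your proof is correct in substance but follows a genuinely different route from the paper's. The paper works directly with the fixed point $V$ and runs a backward induction on the packet delay $l$: the base case at $l^*$ uses the assumption that action $2$ is taken there, the slope of $Q_2$ in $\lambda$ is signed via the separate Lemma~\ref{cogvs3g} ($-P_p+V(\alpha,1)\geq -P_{3G}+V(\beta,1)$, itself proved by a long contradiction argument), and --- crucially --- the sign of the slope of $Q_1$, namely $\Phi+f(l)-P_p+V(\alpha,1)-V(\beta,l+1)$, cannot be determined inside that induction (monotonicity in $\lambda$ at delay $1$ is the \emph{last} statement a backward induction on $l$ delivers), so the paper splits into two cases and, when the slope is negative, shows that action $1$ is dominated by action $0$ so the maximum is still over nondecreasing functions. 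Your restructuring as an induction on the value-iteration index $k$, run jointly with monotonicity in $l$, makes $V^{(k)}(\alpha,1)\geq V^{(k)}(\beta,1)\geq V^{(k)}(\beta,l+1)$ available at every step, so both slopes are nonnegative outright; this eliminates both the case split and the need for Lemma~\ref{cogvs3g}, and is arguably cleaner. What it costs you is twofold. First, you must actually justify that the recentred iterates converge to a solution of the average-reward optimality equation (\ref{vf}); the results of \cite{Putterman} invoked in the paper give existence of such a solution, not convergence of relative value iteration, so you should either verify the unichain/aperiodicity conditions on the (finite, by the argument in the proof of Proposition~\ref{pwlc}) reachable belief set, or route through the discounted problem and a vanishing-discount limit, which preserves monotonicity and needs no such conditions. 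Second, the monotonicity in $l$ you cite is established in the paper for $V$ itself, not for the iterates, so you must rerun that (routine, and in fact simpler: a direct actionwise comparison of $Q_a^{(k)}(\lambda,l)$ and $Q_a^{(k)}(\lambda,l+1)$ with base case $V^{(0)}\equiv 0$) inside your joint induction. Your diagnosis of why the argument is confined to $N=1$ --- the post-sensing beliefs $\alpha,\beta$ are independent of the prior only in the single-channel case --- matches the paper's.
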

\begin{proof}
See Appendix \ref{apmonotonicityb}.
\end{proof}

{Again this result seems somehow intuitive as for the same packet delay, when the belief vector is higher the maximum expected remaining reward becomes higher.}

Given all the previous results on the value function $V({\lambda},l)$, we are able to show the existence of an optimal OSA policy for our POMDP problem. Moreover, we determine explicitly the threshold structure of such optimal policy.

Let us focus on the characteristics of an optimal policy for the SU. Intuitively, when the delay $l$ is small, the SU may choose to waits for a better opportunity. Thus, depending on the belief probability, the SU makes the decision to sense a primary channel or not. We prove in this section, that the intuition is true and there exists an optimal sensing policy which has a threshold structure.

The first decision for a SU is whether to sense the licensed primary channel or to wait, depending on its belief $\lambda$ and the current delay of the packet $l$. We have the following result which gives us a threshold on the belief probability in order to answer this question.
\begin{proposition}\label{thr}
For all packet delay $l$, the optimal action for the SU is to wait for the next slot, i.e. $a^*(\lambda,l)=0$ if and only if
$
\lambda \leq \lambda^*
$
where $\lambda^*$ is the solution of the equation $\lambda^*=\max(0,\min\{Th1(\lambda^*,l),Th2(\lambda^*,l)\})$ with
{\small{
\begin{eqnarray*}\nonumber
Th1(\lambda^*,l)=\frac{V(\Omega^{ns}(\lambda^*|\theta),l+1)-V(\beta,l+1)+c_s}{f(l)+\Phi-P_p+V(\alpha,1)-V(\beta,l+1)},
\mbox{and} \\ \nonumber Th2(\lambda^*,l)=\frac{V(\Omega^{ns}(\lambda^*|\theta),l+1)-V(\beta,1)+c_s-f(l)-\Phi+P_{3G}}{-P_p+V(\alpha,1)+P_{3G}-V(\beta,1)}.
\end{eqnarray*}}}
\normalsize
\end{proposition}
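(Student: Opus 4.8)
The plan is to turn the claim into the two scalar inequalities that make action $0$ optimal, and to analyse each of them as a function of the scalar belief $\lambda\in[0,1]$. By the definition of the optimal action, $a^{*}(\lambda,l)=0$ exactly when $Q_{0}(\lambda,l)\ge Q_{1}(\lambda,l)$ and $Q_{0}(\lambda,l)\ge Q_{2}(\lambda,l)$, ties being resolved in favour of waiting. In the single-channel case the post-sensing beliefs do not depend on $\lambda$, namely $\Omega^{s}(\lambda|\theta=0)=\alpha$ and $\Omega^{s}(\lambda|\theta=1)=\beta$, whereas $\Omega^{ns}(\lambda|\theta)=\beta+(\alpha-\beta)\lambda$ is affine and, since $\alpha\ge\beta$, nondecreasing in $\lambda$ (Lemma~\ref{beliefupdate}). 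Substituting this, both $Q_{1}(\cdot,l)$ and $Q_{2}(\cdot,l)$ become \emph{affine} in $\lambda$, while $Q_{0}(\lambda,l)=-f(l)+V(\Omega^{ns}(\lambda|\theta),l+1)$ is the composition of the convex function $V(\cdot,l+1)$ (Proposition~\ref{pwlc}) with an affine map, hence \emph{convex} in $\lambda$ (and increasing, by Proposition~\ref{monotonicityb}). Therefore the gap functions $D_{1}(\lambda):=Q_{0}(\lambda,l)-Q_{1}(\lambda,l)$ and $D_{2}(\lambda):=Q_{0}(\lambda,l)-Q_{2}(\lambda,l)$ are convex on $[0,1]$, so each strict sublevel set $\{\lambda:D_{i}(\lambda)<0\}$ is a subinterval of $[0,1]$.

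Next I would evaluate the endpoints. At $\lambda=0$ the sensed channel is believed busy for sure, so $Q_{1}(0,l)=Q_{0}(0,l)-c_{s}$, giving $D_{1}(0)=c_{s}>0$; similarly $D_{2}(0)=V(\beta,l+1)-V(\beta,1)-f(l)-\Phi+c_{s}+P_{3G}$, which is $\ge 0$ in the delay range not yet forced onto the dedicated channel, and $<0$ otherwise (in which case $D_{2}$ is negative throughout $[0,1]$, which is precisely what the truncation $\max(0,\cdot)$ absorbs). At $\lambda=1$ the sensed channel is believed idle for sure, so $Q_{1}(1,l)=Q_{2}(1,l)=\Phi-c_{s}-P_{p}+V(\alpha,1)$ while $Q_{0}(1,l)=-f(l)+V(\alpha,l+1)\le -f(l)+V(\alpha,1)$ by monotonicity of $V$ in $l$ (Proposition~\ref{monotonicityl}); hence $D_{1}(1)=D_{2}(1)\le c_{s}+P_{p}-\Phi-f(l)<0$ as soon as $\Phi+f(l)>P_{p}+c_{s}$, i.e. under the minimal requirement for sensing-and-transmitting ever to be worthwhile. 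A convex function on $[0,1]$ positive at $\lambda=0$ and negative at $\lambda=1$ has exactly one sign change: there is $\lambda_{i}^{*}\in(0,1)$ with $D_{i}(\lambda)\ge 0$ for $\lambda\le\lambda_{i}^{*}$ and $D_{i}(\lambda)<0$ for $\lambda>\lambda_{i}^{*}$ (with $D_{i}(\lambda_{i}^{*})=0$, so action $0$ is still optimal at the boundary under the tie rule). Writing $D_{1}(\lambda_{1}^{*})=0$, collecting the terms in the expression for $Q_{0}-Q_{1}$ and dividing by the coefficient of $\lambda$ yields precisely $\lambda_{1}^{*}=Th1(\lambda_{1}^{*},l)$, and the same manipulation on $D_{2}(\lambda_{2}^{*})=0$ yields $\lambda_{2}^{*}=Th2(\lambda_{2}^{*},l)$.

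Finally I would combine the two conditions: $a^{*}(\lambda,l)=0$ iff $\lambda\le\lambda_{1}^{*}$ and $\lambda\le\lambda_{2}^{*}$, i.e. iff $\lambda\le\lambda^{*}$ with $\lambda^{*}:=\max\{0,\min(\lambda_{1}^{*},\lambda_{2}^{*})\}$. To see that this $\lambda^{*}$ solves the stated fixed-point equation, suppose without loss of generality $\lambda^{*}=\lambda_{1}^{*}>0$: at $\lambda=\lambda^{*}$ we have $Q_{0}=Q_{1}$, so $Th1(\lambda^{*},l)=\lambda^{*}$, and $Q_{0}\ge Q_{2}$, so $Th2(\lambda^{*},l)\ge\lambda^{*}$, whence $\min\{Th1(\lambda^{*},l),Th2(\lambda^{*},l)\}=\lambda^{*}$; the case $\lambda^{*}=\lambda_{2}^{*}$ is symmetric, and $\lambda^{*}=0$ is immediate.

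I expect the \textbf{main obstacle} to be the ``unique crossing'' step — showing that $\{\lambda:a^{*}(\lambda,l)=0\}$ is genuinely a lower interval $[0,\lambda^{*}]$ rather than a union of intervals. This is exactly where the piecewise-linearity and convexity of $V(\cdot,l+1)$ (Proposition~\ref{pwlc}) must be used, paired carefully with the endpoint signs; the $\lambda=1$ endpoint in turn leans on monotonicity of $V$ in $l$ (Proposition~\ref{monotonicityl}) and on the standing parameter assumptions ($c_{s}>0$, $\Phi$ not too small). A secondary, purely bookkeeping, issue is the self-referential form of $Th1$ and $Th2$: ``the solution of the equation'' has to be read as the zero of the convex gap function, whose existence and essential uniqueness the convexity argument provides.
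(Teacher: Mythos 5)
Your argument is correct and is essentially the paper's own proof in a slightly cleaner dress: the paper likewise reduces the claim to comparing the convex (piecewise linear, by Proposition~\ref{pwlc}) function $V(\Omega^{ns}(\lambda|\theta),l+1)$ against the affine functions hidden in $Q_1$ and $Q_2$, pins down the signs at the two ends of the relevant interval to force a single crossing, and reads off $Th1$ and $Th2$ by solving the equality. The only real difference is the right endpoint: you use $\lambda=1$ together with monotonicity in $l$ and the assumption $\Phi\geq c_s+P_p$, whereas the paper uses $\lambda=\pi(0)$ via the inequality $g_u>-f(l)$ (and Proposition~\ref{aftersucc}); both close the same gap.
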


\begin{proof}
See Appendix \ref{apthr}.
\end{proof}


This proposition gives us a necessary and sufficient condition on the use of the action 0 depending on the belief probability $\lambda$. Consequently, if $\lambda > \lambda^*$ then the optimal action is to sense a primary channel, i.e. $a^*(\lambda,l)\neq 0$.

Furthermore, we have the following property of the optimal policy.
\begin{proposition}\label{aftersucc}
For all $\lambda>\pi(0)$ and $l$, the SU never takes the action $0$ and thus, $Q_0(\lambda,l)< \max{\{Q_{1}(\lambda,l),Q_{2}(\lambda,l)\}}$.
\end{proposition}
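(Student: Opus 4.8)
The plan is to show that for $\lambda > \pi(0)$ the waiting action $a=0$ is strictly dominated by the better of actions $1$ and $2$. I would argue by contradiction with the monotonicity properties already established. Suppose that for some $\lambda > \pi(0)$ and some delay $l$ the optimal action is $a^*(\lambda,l)=0$, i.e. $Q_0(\lambda,l)\geq \max\{Q_1(\lambda,l),Q_2(\lambda,l)\}$. By Proposition~\ref{thr} this forces $\lambda \leq \lambda^*$, so it suffices to prove that $\lambda^* \leq \pi(0)$; then no belief strictly above $\pi(0)$ can be a ``wait'' state.

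To bound $\lambda^*$, I would work directly from the fixed-point characterization $\lambda^* = \max\bigl(0,\min\{Th1(\lambda^*,l),Th2(\lambda^*,l)\}\bigr)$ given in Proposition~\ref{thr}, and show that at a belief $\lambda \geq \pi(0)$ we already have $Q_0(\lambda,l) < \max\{Q_1(\lambda,l),Q_2(\lambda,l)\}$, equivalently that $Th1(\lambda,l) < \lambda$ or $Th2(\lambda,l) < \lambda$ whenever $\lambda \geq \pi(0)$. The key structural input is Lemma~\ref{beliefupdate}(2): for $\lambda \geq \pi(0)$ we have $\Omega^{ns}(\lambda|\theta) \leq \lambda$, and by Lemma~\ref{beliefupdate}(1) together with $\beta \leq \Omega^{ns}(\lambda|\theta)$ (since $\Omega^{ns}$ maps $[0,1]$ into $[\beta,\alpha]$ and is increasing), one gets $\beta \leq \Omega^{ns}(\lambda|\theta) \leq \lambda \leq \alpha$. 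Feeding this into $Q_0$ versus $Q_1$: expanding $Q_1(\lambda,l)-Q_0(\lambda,l)$ using \eqref{maxexp0} and \eqref{maxexp1}, the delay terms $-f(l)$ cancel in part and one is left with
\[
Q_1(\lambda,l)-Q_0(\lambda,l) = -c_s + \lambda\bigl(\Phi - P_p + V(\alpha,1) - V(\beta,l+1)\bigr) + \bigl(V(\Omega^{s}(\lambda|\theta=1),l+1) - V(\Omega^{ns}(\lambda|\theta),l+1)\bigr).
\]
Since $\alpha \geq \Omega^{ns}(\lambda|\theta)$, Proposition~\ref{monotonicityb} gives $V(\alpha,1) \geq V(\Omega^{ns}(\lambda|\theta),1) \geq V(\Omega^{ns}(\lambda|\theta),l+1)$ (the second step by Proposition~\ref{monotonicityl}); and $\Omega^s(\lambda|\theta=1)=\beta$, while $V(\beta,l+1)$ compares with $V(\Omega^{ns}(\lambda|\theta),l+1)$ via $\beta \leq \Omega^{ns}(\lambda|\theta)$ and Proposition~\ref{monotonicityb} again. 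The upshot is that for $\lambda \geq \pi(0)$ the ratio defining $Th1$ (resp. $Th2$) stays strictly below $\lambda$, so the fixed point $\lambda^*$ cannot exceed $\pi(0)$; hence $\lambda > \pi(0) \Rightarrow \lambda > \lambda^* \Rightarrow a^*(\lambda,l)\neq 0$, which is exactly $Q_0(\lambda,l) < \max\{Q_1(\lambda,l),Q_2(\lambda,l)\}$.

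The main obstacle I anticipate is controlling the term $V(\Omega^s(\lambda|\theta=1),l+1) - V(\Omega^{ns}(\lambda|\theta),l+1)$ cleanly: both arguments lie in $[\beta,\alpha]$ but on the ``wait'' branch the belief drifts toward $\pi(0)$ from above while on the ``sense and observe busy'' branch it jumps to $\beta$, so one must combine the belief-monotonicity of $V$ (Proposition~\ref{monotonicityb}) with the ordering $\beta \le \Omega^{ns}(\lambda|\theta)$ in the right direction, and verify that the sensing cost $c_s$ is indeed dominated once $\lambda$ is large enough — which is where the standing assumption that a dedicated channel is eventually used (ensuring $V$ is bounded and the relevant differences are finite) is needed. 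A careful bookkeeping of which inequalities are strict (strictness of $\Omega^{ns}(\lambda|\theta) < \lambda$ for $\lambda > \pi(0)$, from Lemma~\ref{beliefupdate}) will deliver the strict inequality in the statement rather than a weak one.
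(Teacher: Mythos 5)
There is a genuine gap. You correctly identify the structural inputs (Lemma~\ref{beliefupdate} giving $\Omega^{ns}(\lambda|\theta)\leq\lambda$ for $\lambda\geq\pi(0)$, plus Propositions~\ref{monotonicityl} and~\ref{monotonicityb}), but the route you choose --- proving the claim by showing directly that $Q_1(\lambda,l)-Q_0(\lambda,l)>0$ or $Q_2(\lambda,l)-Q_0(\lambda,l)>0$, equivalently $Th1(\lambda,l)<\lambda$ or $Th2(\lambda,l)<\lambda$ --- does not close, and you flag the obstruction yourself without resolving it. In your own expansion of $Q_1-Q_0$ the two terms you must control are $-c_s<0$ and $V(\beta,l+1)-V(\Omega^{ns}(\lambda|\theta),l+1)$; since $\beta\leq\Omega^{ns}(\lambda|\theta)$, Proposition~\ref{monotonicityb} makes the latter $\leq 0$ as well, so both point in the \emph{wrong} direction, and none of the established structural results lets you dominate the sensing cost $c_s$ for $\lambda$ near $\pi(0)$. (Your expansion also drops a $+\lambda f(l)$ term, but that is secondary; the reduction through Proposition~\ref{thr} is likewise harmless but does no work, since showing $\lambda^*\leq\pi(0)$ is just a restatement of what must be proved.)

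The idea you are missing is that the paper never compares $Q_0$ with $Q_1$ or $Q_2$ at all: it exploits the average-reward Bellman equation~(\ref{vf}) and the gain constant $g_u$. Suppose action $0$ is optimal at $(\lambda,l)$ with $\lambda>\pi(0)$. Then
\begin{eqnarray*}
g_u+V(\lambda,l)&=&Q_0(\lambda,l)=-f(l)+V(\Omega^{ns}(\lambda|\theta),l+1)\\
&\leq&-f(l)+V(\Omega^{ns}(\lambda|\theta),l)\;\leq\;-f(l)+V(\lambda,l),
\end{eqnarray*}
where the first inequality is Proposition~\ref{monotonicityl} and the second is Proposition~\ref{monotonicityb} combined with $\Omega^{ns}(\lambda|\theta)\leq\lambda$. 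Cancelling $V(\lambda,l)$ yields $g_u\leq -f(l)$, which contradicts the separately established bound $g_u>-f(l)$ (proved in Appendix~\ref{apthr}). This one-line cancellation is what makes the sensing cost and the sign of $V(\beta,l+1)-V(\Omega^{ns}(\lambda|\theta),l+1)$ irrelevant; without invoking $g_u>-f(l)$, your direct domination argument cannot be completed from the stated monotonicity results alone.
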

\begin{proof}
See Appendix \ref{apaftersucc}.
\end{proof}


Therefore, the SU never chooses the action $0$ after it transmits a packet over the primary channel because $\Omega^s({\lambda},\theta=0)=\alpha>\pi(0)$. Furthermore, we have the following result about the use of the dedicated channel.

\begin{proposition}\label{propl}
For all belief $\lambda$, the SU chooses to use the dedicated channel in spite of waiting for the next slot ($a^*(\lambda,l)=2$) if and only if the delay $l$ of the current packet verifies:
\begin{eqnarray*}
-f(l)-\Phi+P_{3G}+V(\beta,l+1)-V(\beta,1)>0.
\end{eqnarray*}
\normalsize
\end{proposition}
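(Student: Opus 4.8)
Proposition \ref{thr} already settles when the SU prefers to stay idle, so the substance of this statement is the comparison between the two \emph{sensing} actions: the dedicated channel is used, rather than waiting for the next slot after a busy observation, precisely when $Q_2(\lambda,l)\ge Q_1(\lambda,l)$. The plan is to subtract (\ref{maxexp1}) from (\ref{maxexp2}) and read off the sign.

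In the single-channel case the post-sensing belief updates are deterministic: by (\ref{upomega}), $\Omega^{s}(\lambda\mid\theta=0)=\alpha$ and $\Omega^{s}(\lambda\mid\theta=1)=\beta$. After substituting these, the term $-c_s$ cancels, and so does the entire $\theta=0$ contribution (weighted by $\lambda$): under either action the SU transmits on the primary channel when it is found idle, collects $-P_p$, and moves to the same state $(\alpha,1)$. What survives are the $\theta=1$ branches (weighted by $1-\lambda$) together with the throughput $\Phi$ that action $2$ secures with certainty, and collecting terms will yield the factorization
\[
Q_2(\lambda,l)-Q_1(\lambda,l)=(1-\lambda)\Big(\Phi+f(l)-P_{3G}+V(\beta,1)-V(\beta,l+1)\Big).
\]

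Since $1-\lambda\ge 0$, the sign of this difference --- hence whether action $2$ dominates action $1$ --- is governed entirely by the bracketed term, which does not involve $\lambda$; this is precisely the claim that the choice of the dedicated channel depends on the delay $l$ alone. Rearranging $Q_2(\lambda,l)\ge Q_1(\lambda,l)$ then gives the stated inequality on $l$, and, because $f$ is increasing while $l\mapsto V(\beta,l+1)$ is decreasing (Proposition \ref{monotonicityl}), the bracket is monotone in $l$, so the delays meeting the condition form an interval $\{l\ge l^*\}$, consistent with the standing assumption that such an $l^*$ exists. The one delicate point is that the literal statement ``$a^*(\lambda,l)=2$'' also demands $Q_2\ge Q_0$; I would handle this by observing that once $Q_2\ge Q_1$ we have $Q_2=\max\{Q_1,Q_2\}$, whereupon Proposition \ref{thr} gives $Q_2\ge Q_0\Leftrightarrow\lambda>\lambda^*$, so the delay condition together with $\lambda>\lambda^*$ pins down $a^*=2$. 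I expect this bookkeeping --- keeping straight which alternative, $Q_0$ or $Q_1$, is being compared against --- to be the only real obstacle; the algebra leading to the factorization is routine.
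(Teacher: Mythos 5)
Your decomposition is exactly the paper's: compare $Q_1$ and $Q_2$ directly, observe that $-c_s$ and the entire $\theta=0$ branch cancel, and factor the difference as $(1-\lambda)$ times a quantity independent of $\lambda$. The identity $Q_2(\lambda,l)-Q_1(\lambda,l)=(1-\lambda)\bigl(\Phi+f(l)-P_{3G}+V(\beta,1)-V(\beta,l+1)\bigr)$ is correct, and your two side remarks are sound: the bracket is increasing in $l$ (by monotonicity of $f$ and Proposition \ref{monotonicityl}), so the favorable delays form an interval $\{l\ge l^*\}$; and the residual comparison with $Q_0$, which the paper's appendix ignores entirely, is properly discharged through Proposition \ref{thr}. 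On both of those points you are more careful than the printed proof.

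The one genuine problem is the final sign. From your own factorization, $Q_2\ge Q_1$ is equivalent to the bracket being nonnegative, i.e. to $-f(l)-\Phi+P_{3G}+V(\beta,l+1)-V(\beta,1)\le 0$ --- the \emph{negation} of the displayed condition, so ``rearranging $Q_2\ge Q_1$ then gives the stated inequality'' is false as written. The condition printed in Proposition \ref{propl} actually characterizes $Q_1>Q_2$, i.e. preferring to wait after a busy observation. This is not an artifact of your algebra: the paper's appendix derives precisely the condition under which ``the SU waits for the next time slot after sensing,'' and the proof of Corollary \ref{corol2} concludes that the SU transmits on the dedicated channel when the expression is \emph{negative} --- both consistent with your factorization and inconsistent with the proposition as stated. (Sanity check: the printed condition, being decreasing in $l$, would have the SU use the dedicated channel only for \emph{small} delays, contradicting the paper's own numerical discussion.) So the statement has its inequality reversed; your write-up should prove the corrected direction and say so explicitly, rather than assert that the algebra lands on the printed inequality --- as it stands, that last step would not survive checking.
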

\begin{proof}
See Appendix \ref{appropl}.
\end{proof}


We note that this expression does not depend on the cost of sensing $c_s$ nor on the belief vector $\lambda$. That is obvious as this expression determines the best action to do after sensing a channel. We have the last property about the optimal threshold policy.

\begin{Corollary}[Never Wait After Sensing]\label{corol2}
If, for all $l$, the penalty cost $-f(l)$ is lower than $\Phi-P_{3G}$, then  the SU transmits on the dedicated channel when the sensed channel is not idle.
\end{Corollary}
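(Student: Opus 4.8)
The idea is to compare the two ``sensing'' action-value functions $Q_1$ and $Q_2$ and to show that, under the hypothesis $-f(l)<\Phi-P_{3G}$ for every $l$, one has $Q_2(\lambda,l)\ge Q_1(\lambda,l)$ at every state $(\lambda,l)$. Since actions $1$ and $2$ differ only in what the SU does after observing a busy channel (wait versus use the dedicated link), this domination says that whenever sensing is optimal at all, the SU selects action $2$, and hence transmits on the dedicated channel precisely on the slots where the sensed channel is busy.

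First I would form the difference $Q_2(\lambda,l)-Q_1(\lambda,l)$ from the expressions (\ref{maxexp1}) and (\ref{maxexp2}). The sensing cost $-c_s$ and the entire ``channel idle'' contribution $\lambda\bigl(\Phi-P_p+V(\Omega^{s}(\lambda|\theta=0),1)\bigr)$ are common to $Q_1$ and $Q_2$, so they cancel; what survives is $(1-\lambda)$ times the difference of the two ``channel busy'' continuation values. By the belief update rule (\ref{upomega}), in the single-channel case an observation $\theta=1$ resets the belief to $\Omega^{s}(\lambda|\theta=1)=\beta$, so after collecting terms the difference reduces to
$$Q_2(\lambda,l)-Q_1(\lambda,l)=(1-\lambda)\Bigl[\Phi-P_{3G}+f(l)+V(\beta,1)-V(\beta,l+1)\Bigr].$$

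It then remains to sign the bracket. Since the packet delay is always at least $1$, we have $l+1\ge 1$, so Proposition \ref{monotonicityl} (monotonicity of $V$ in the delay) gives $V(\beta,l+1)\le V(\beta,1)$, i.e.\ $V(\beta,1)-V(\beta,l+1)\ge 0$; and the hypothesis $-f(l)<\Phi-P_{3G}$ is exactly $\Phi-P_{3G}+f(l)>0$. Adding these two facts, the bracket is strictly positive, and because $1-\lambda\ge 0$ we conclude $Q_2(\lambda,l)\ge Q_1(\lambda,l)$, with strict inequality whenever $\lambda<1$. Hence, on any slot where sensing is optimal (by Proposition \ref{thr} this is the region $\lambda>\lambda^*$), the optimal sensing action is $2$, so a busy observation is followed by a transmission on the dedicated channel; the borderline case $\lambda=1$ is immaterial since then the sensed channel is idle with probability one and the event ``the sensed channel is not idle'' has probability zero. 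I do not expect a genuine obstacle here: the only subtleties are disposing of the equality case $\lambda=1$ via this zero-probability remark and invoking the delay-monotonicity inequality in the correct direction ($l+1\ge 1$, hence $V(\beta,l+1)\le V(\beta,1)$).
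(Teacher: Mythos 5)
Your proof is correct and follows essentially the same route as the paper: the paper's own argument combines Proposition \ref{propl} (which is precisely the $Q_1$ versus $Q_2$ comparison after a busy observation that you re-derive inline) with the delay-monotonicity of $V$ from Proposition \ref{monotonicityl} to conclude that $-f(l)-\Phi+P_{3G}+V(\beta,l+1)-V(\beta,1)$ is negative, i.e.\ that your bracket $\Phi-P_{3G}+f(l)+V(\beta,1)-V(\beta,l+1)$ is positive. The only difference is cosmetic: you carry out the $Q_2-Q_1$ computation explicitly rather than citing Proposition \ref{propl}, which if anything makes the sign conventions clearer.
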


\begin{proof}
See Appendix \ref{apcorol2}.

\end{proof}

This result is also somewhat intuitive. In fact, when the SU senses the channel as busy, it gets $\Phi-P_{3G}$ as reward if he uses the dedicated channel otherwise he gets a penalty $-f(l)$ if he decides to wait. Thus, if $\Phi-P_{3G}+f(l)$ is positive the SU has no incentive to wait after sensing the licensed primary channels.

 In the literature, the transition rates $\alpha$ and $\beta$ are assumed to be known by the SU. We focus in the next section on online learning algorithms that allow the SU to estimate those rates on the fly. In fact, in practice, some information like the transition rates $\alpha$ and $\beta$ are not available for the SU.

\section{Online Policy Learning}\label{seclearning}
\subsection{Online Learning of PU's activity}\label{learnPU}
In this section, we consider a model where the SU does not have external information about the state transition rates.
SU begins with an initial arbitrary values of $\alpha$ and $\beta$. He updates them every time slot depending on the information about the system state. Then, the SU computes its sensing policy based on the estimators $\hat{\alpha}=\{\hat{\alpha}_1,...,\hat{\alpha}_N\}$ and $\hat{\beta}=\{\hat{\beta}_1,...,\hat{\beta}_N\}$ where $\hat{\alpha}_i$ (resp. $\hat{\beta}_i$) is the estimator of $\alpha_i$ (resp. $\beta_i$).

First, the SU estimates $\hat{\alpha}_i$ which is the probability that the channel $i$ will be sensed idle given that it was idle in the previous slot. Second, the SU estimates $\hat{\pi}_i(0)$ the stationary probability for this channel to be idle. The SU obtains the estimated value of $\beta_i$ based on the relation $\hat{\beta}_i=(1-\hat{\alpha}_i)\frac{\hat{\pi}_i(0)}{1-\hat{\pi}_i(0)}$.

Formally, we consider the following counting processes for the estimation of $\hat{\alpha}_i$ and $\hat{\pi}_i(0)$:

\begin{itemize}
  \item The vector $\hat{K}=\{\hat{K}_1,...,\hat{K}_N\}$ where $\hat{K}_i$ represents the number of time slots a channel stays in the idle state, i.e. $\hat{K}_i$ is incremented if the channel $i$ is sensed and is idle at time slot $t$ and $t-1$.
  \item The vector $\hat{I}=\{\hat{I}_1,...,\hat{I}_N\}$ where $\hat{I}_i$ represents the number of time slots that the channel is sensed and is idle.
  \item The vector $\hat{M}=\{\hat{M}_1,...,\hat{M}_N\}$ where $\hat{M}_i$ represents the number of time slots that the channel is sensed.
\end{itemize}

Therefore the SU estimates the state transition rates $\hat{\alpha}$ and $\hat{\pi}_i(0)$ based on the following expressions:
$\hat{\alpha}_i=\frac{\hat{K}_i}{\hat{I}_i}$ and $\hat{\pi}_i(0)=\frac{\hat{I}_i}{\hat{M}_i}$.
\subsection{Learning Algorithm}
 Since solving POMDPs suffers from the higher computational complexity, we consider that the SU do not solve the POMDP defined in Section \ref{section3}. Instead, we suppose that the SU has two options:
\begin{itemize}
\item The SU sends the channel transitions to a server in which the POMDP problem is solved offline for different values of channel transitions.
\item Knowing that the optimal OSA policy has a threshold structure, the SU computes an optimal OSA policy using an online learning algorithm.
\end{itemize}

We focus, in this section, on the second option and we propose an online learning algorithm that allow the SU to determine the OSA policy on the fly. We propose an on-policy Sarsa-based learning algorithm, where the SU maintains a state-action Q-value $Q(\alpha, \beta, \Lambda^*)$. For each value of transition rate, estimated by the SU, the SU chooses the threshold policy that maximizes its state-action Q-value: $\Lambda^*=\arg\max\limits_{\Lambda}Q(\alpha, \beta, \Lambda)$. Note that $\Lambda^*=\{\lambda^*_1,\lambda^*_2,\cdots\}$, where $\lambda^*_i$ is the threshold belief probability below which the SU do not sense licensed primary channels when the delay of its packet equals $i$. In Algorithm \ref{algo1}, we have used an aggregation parameters $m$ in order to transform the continuous space of channel transitions into a discrete one. In fact, we consider that $\alpha_i=\frac{k}{m}$ if $\alpha_i\in[\frac{k}{m},\frac{k+1}{m}], 0\leq k\leq m$. Indeed, increasing $m$ increases the accuracy of the algorithm, however it increases also the memory requirements. Once the SU estimates the channels transitions, it chooses a threshold policy that it can not change before $nbsolt$ time slot.  $\rho_k$ is the learning rate factor satisfying $\sum_{k=1}^\infty\rho_k=\infty,\sum_{k=1}^\infty(\rho_k)^2<\infty$, e.g. $\rho_k=\frac{1}{k}$, and $\eta$ is the discount factor.

\begin{algorithm}
\caption{Learning-based algorithm for the SU}
\label{algo1}
\begin{algorithmic}
\STATE Initialize $Q(\alpha, \beta, \Lambda)=0$ for all channels transitions and threshold policies;
\STATE Initialize $\Lambda^*$  to a random value;
\STATE Set $R=0$;
\WHILE{true}
\STATE $\Lambda^*_{prev}=\Lambda$;
\STATE $\alpha^{prev}=\alpha$;
\STATE $\beta^{prev}=\beta$;
\STATE Estimate the channels transitions $\alpha$ and $\beta$ using the method described in Section \ref{learnPU};
\STATE Select the threshold policy $\Lambda^*$ as follows: $\Lambda^*=\arg\max\limits_{\Lambda}Q(\alpha, \beta, \Lambda)$ with probability $(1-\epsilon)$, else choose a random policy;
\FOR{$n=1\rightarrow nbslot$}
\STATE Transmit packet using the threshold policy $\Lambda^*$.
\STATE $R=R+r_t((\vec{\lambda}, l),a)$;
\ENDFOR
\STATE $Q(\alpha^{prev},\beta^{prev},\Lambda^*_{prev})\leftarrow \rho_kQ(\alpha^{prev},\beta^{prev},\Lambda^*_{prev}) +(1-\rho_k)(R+\eta Q(\alpha,\beta,\Lambda^*))$;
\STATE $R=0$;
\STATE $k=k+1$;
\ENDWHILE
\end{algorithmic}
\end{algorithm}


\section{Numerical Illustrations}\label{section5}

In this section, we validate our results through simulations of the system over an important number of packets (we consider 3000 packets).  We consider the following system parameters: $P_{3G}=800$, $P_p=100$, $c_S=50$ and $\Phi=350$ bits. We consider the delay penalty function $f(l)=\gamma \log(l)$, where $\gamma$ is the delay penalty parameter.  {We investigate the optimal policy for the SU, and its threshold structure, in the single channel model and in the multi-channel model. Moreover, we show how we can tune the system parameters (delay penalty and sensing cost) in order to obtain a target packet' delay or energy consumption. Thereafter, we compare our proposed threshold-based OSA policy with a set of memoryless policies. Finally, we illustrate how the SU learns the PUs' activity and the OSA policy on the fly.}

 \subsection{Multiple channel model}

  We consider the following three scenarios with symmetric channels:
\begin{enumerate}
  \item Scenario 1: Licensed primary channels are often occupied ($\alpha_1=\alpha_2=\alpha_3=\alpha_4=0.15$ and $\beta_1=\beta_2=\beta_3=\beta_4=0.1$),
  \item Scenario 2: Licensed primary channels are often idle ($\alpha_1=\alpha_2=\alpha_3=\alpha_4=0.85$ and $\beta_1=\beta_2=\beta_3=\beta_4=0.7$),
  \item Scenario 3: Licensed primary channels have low transition rates ($\alpha_1=\alpha_2=\alpha_3=\alpha_4=0.95$ and $\beta_1=\beta_2=\beta_3=\beta_4=0.05$). This last scenario is realistic if we consider TV white space \cite{SS09}.
\end{enumerate}
We consider 4 i.i.d licensed primary channels, i.e. $N=4$, due to exponential states space and we set $\gamma=10$. We simulate the three scenarios and we depict in Figure \ref{figoccup} the thresholds $\lambda^*(l)$ determined in proposition \ref{thr} depending on the packet delay $l$ for each scenario.We observe that the SU policy has also a threshold structure. For every packet delay $l$, the best action for the SU is to wait for the next slot if its belief probability is lower than $\lambda^*$. Otherwise, he senses a licensed primary channel. In this context, where licensed primary channels are often occupied (Scenario 1, Figure \ref{figoccup}), the maximum packet delay $l^*$ obtained with Proposition \ref{propl} equals 9.  The maximum packet delay for scenarios 2 and 3 is $l^*=5$.  Note that the  threshold belief probability $\lambda^*$ is not decreasing with the packet delay.  In fact, since licensed primary channels are more static (the probability for each channel to stay occupied or idle is high enough), it appears one kind of periodic threshold strategy.

The sensing probability presented on the y-axis in Fig. \ref{figoccup},\ref{learningpng} and  \ref{1ch1} refer to the belief probability introduced in Section III, Equation (1). At each time slot t, based on its belief vector ${\vec{\lambda}(t)}$, the SU chooses a channel to be sensed. There exists several channel choice policies in the literature such as deterministic, randomized and periodic. In this paper, we consider that the SU senses the channel, which has the highest probability to be idle. 

\begin{figure}
  \begin{center}
      \includegraphics[width=0.5\textwidth]{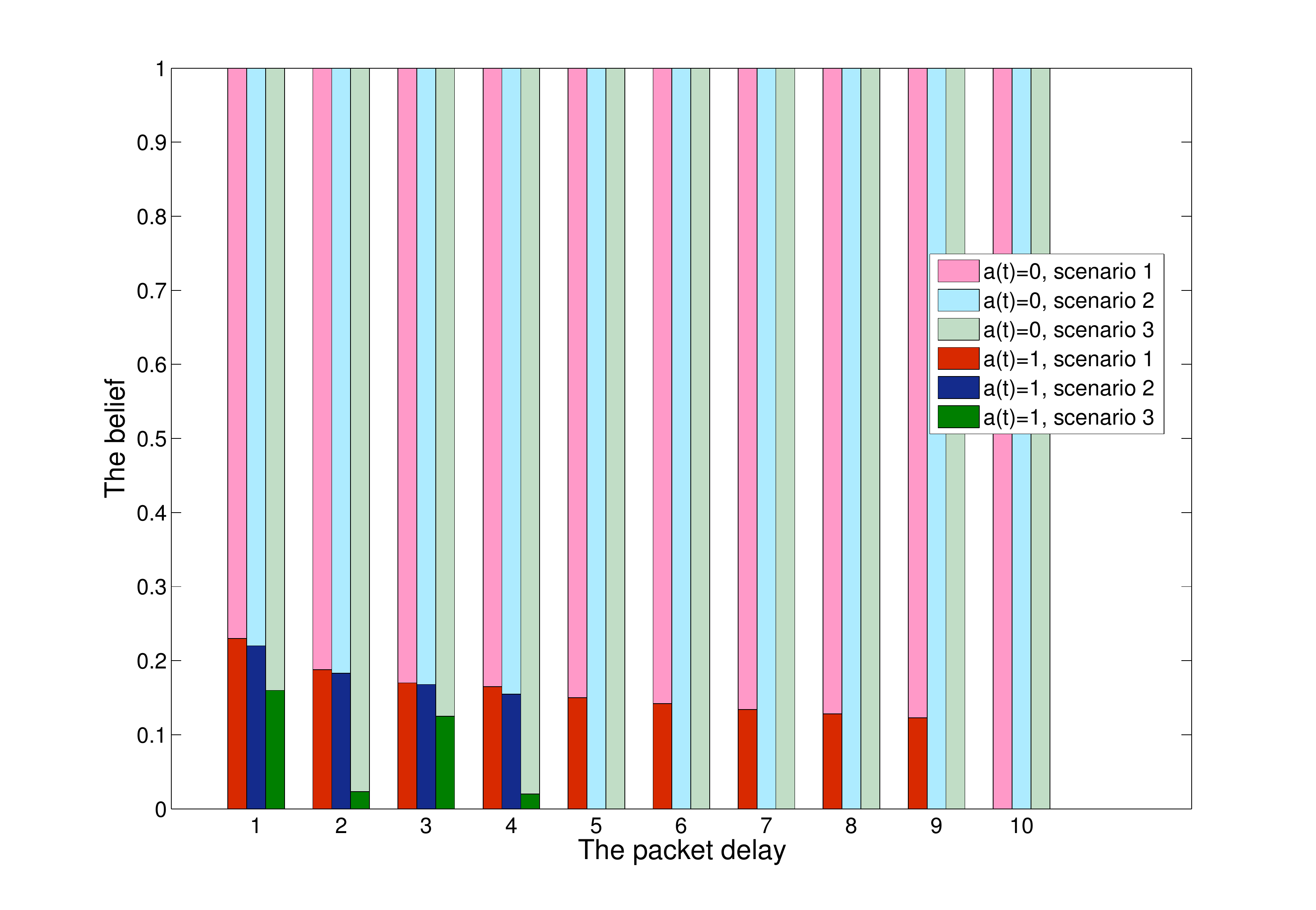}
      \caption{Optimal policy for the SU in the multichannel case for scenarios 1, 2 and 3.}
      \label{figoccup}
  \end{center}
\end{figure}

\subsection{Online policy learning}
We consider 4 i.i.d licensed primary channels, i.e. $N=4$, and we simulate the first scenario. We depict, in Figure \ref{learningpng}, the OSA learning obtained after $200$ iterations of the learning algorithm proposed in Section \ref{seclearning}. Note that even if the learning algorithm gives a suboptimal OSA policy, it allows the SU to determine a near optimal OSA policy on the fly. We observe also that the learning algorithm leads to a less risky policy compared to the optimal one, in the sense that a any packet delay $l$ the sensing probability is higher with the learning compared with the one obtained with the optimal policy.
\begin{figure}[!h]
  \includegraphics[width=0.5\textwidth]{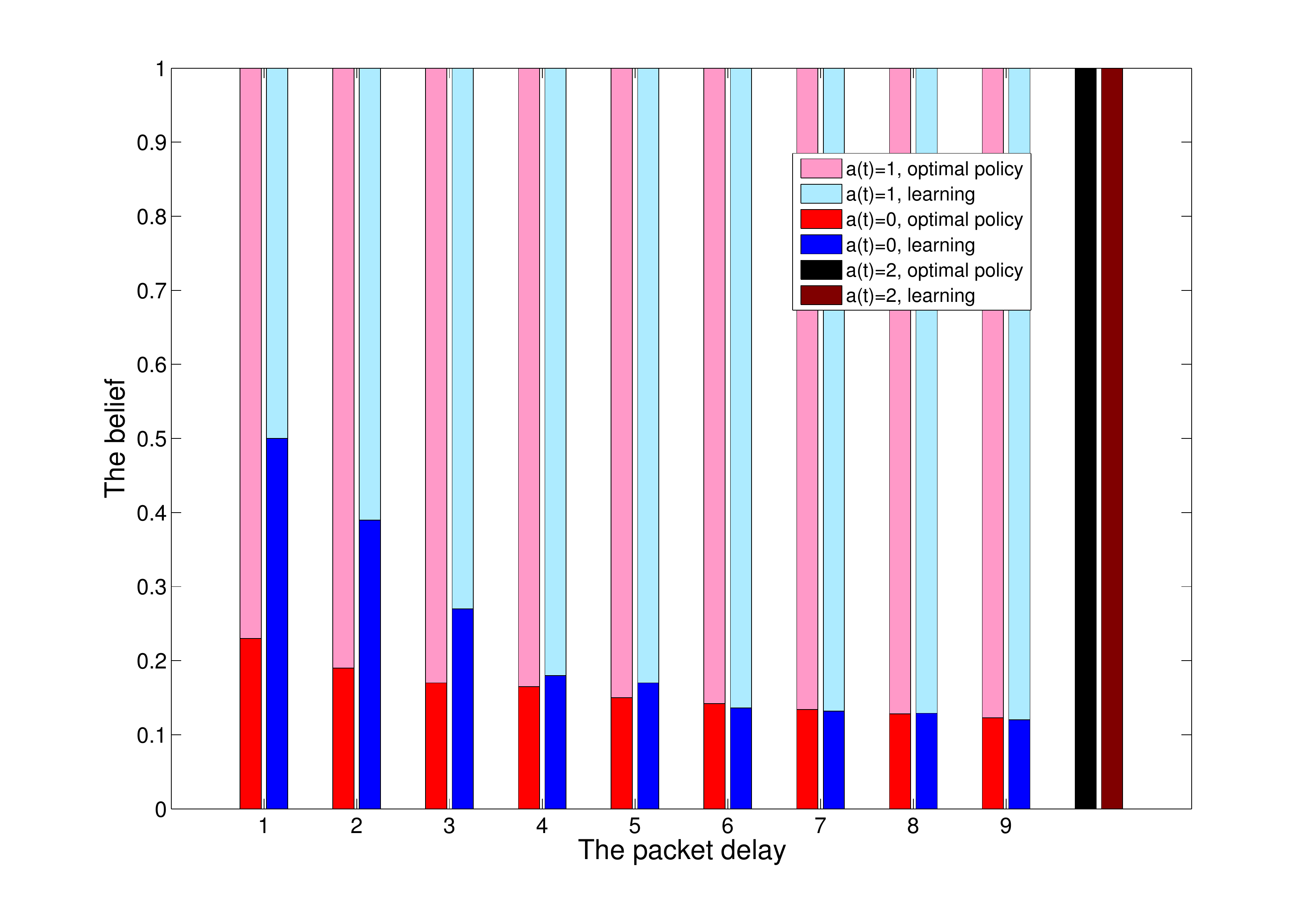}
  \caption{OSA policy for the SU with online learning for scenario 1}\label{learningpng}
\end{figure}

\subsection{Single channel model}
\subsubsection{Impact of the sensing cost}
Let us consider a SU and one channel licensed to PU. We simulate  a scenario where the transition rates $\alpha=0.15$ and $\beta=0.1$. We illustrate, in this section, the impact of the sensing cost on the optimal OSA policy of the SU. Figure \ref{1ch1} depicts the optimal policy of the SU depending on the belief and the packet delay, for different values of sensing cost ($c_s=50$ and $c_s=200$). For each packet delay, the SU has a threshold policy depending the belief probability. Indeed, given the packet delay, if the belief probability of the SU is higher than the threshold he senses the licensed primary channel, otherwise he remains idle and waits for the next time slot. Specifically, even if we were not able to prove analytically that the belief threshold is decreasing with respect to the packet delay, we observe, in Figure \ref{1ch1}, that the threshold belief probability $\lambda^*$ is decreasing with packets' delays in both scenarios. Note that the SU waits for the next time slot if the channel is sensed as busy until the packet delay equals 13 for $c_s=50$ (and 3 for $c_s=200$), then he transmits the packet using the dedicated channel.  Indeed, as the sensing cost increases, the SU has less incentive to sense licensed primary channels.
\begin{figure}
  \begin{center}
      \includegraphics[width=0.5\textwidth]{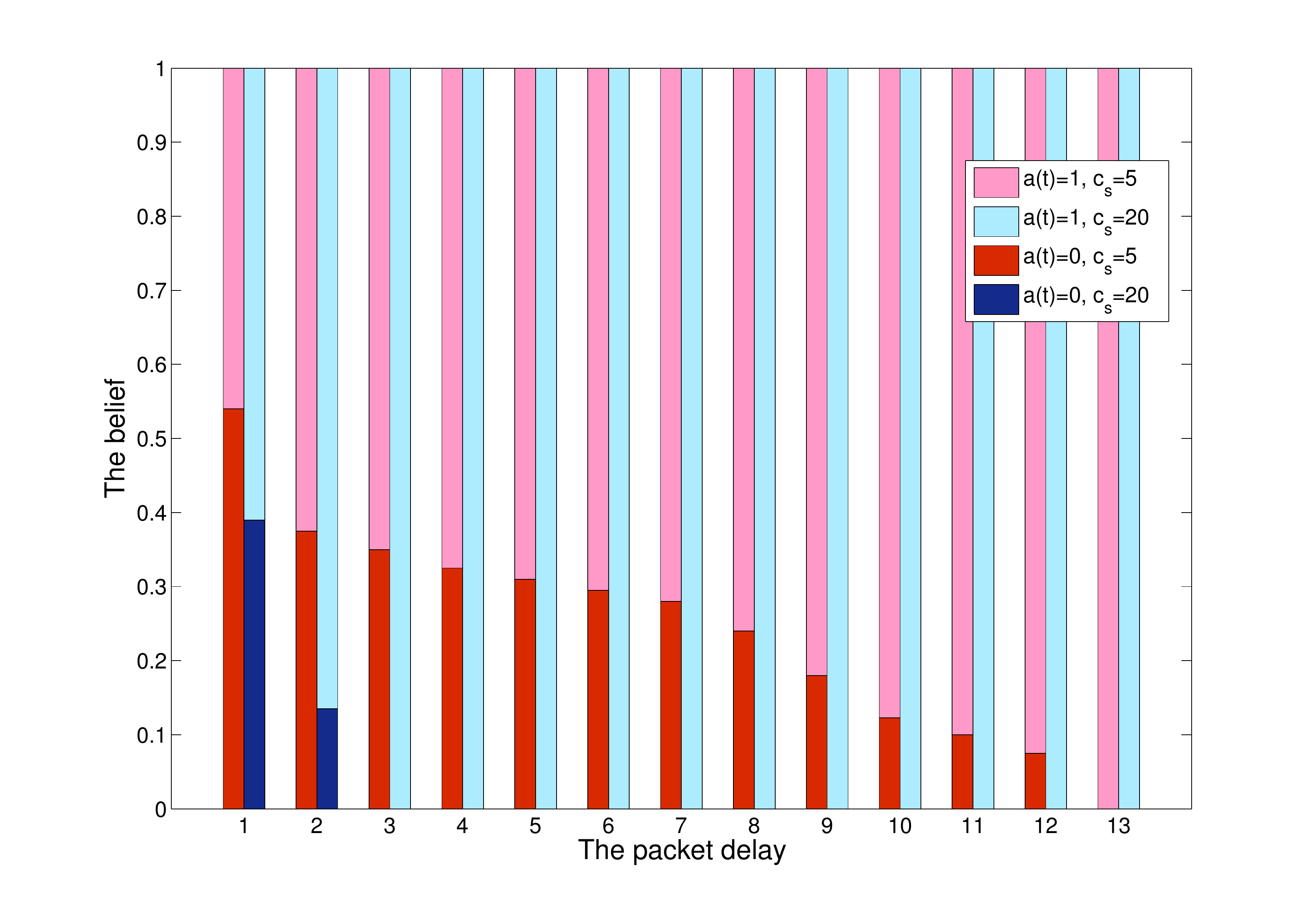}
      \caption{Optimal policy with one licensed primary channel for $c_s=5$ and $c_s=20$, $\alpha=0.15$ and $\beta=0.1$.}
      \label{1ch1}
  \end{center}
\end{figure}

\subsubsection{Impact of the delay penalty}
We investigate, in this section, the impact of the delay penalty on performance metrics like the average packet delay and the average energy consumption per packet, using the optimal policy. Indeed, it is possible to tune the delay penalty parameter $\gamma$ in order to obtain targeted values for the average delay and for the energy consumption. We illustrate, in Figure \ref{del_op}, the average delay, obtained with the optimal policy, as a function of the penalty parameter $\gamma$. In fact, we observe that the average delay is strictly decreasing with the delay penalty. This result is somehow intuitive as the user has less incentive to wait for next time slots when the penalty of the delay increases. Moreover, we plot in Figure \ref{cos_op} the average energy consumption per time slot  depending on the delay penalty $\gamma$. Indeed, the higher is the penalty $\gamma$, the lower is the average delay and the higher is the energy consumption, since the SU transmits more often over the dedicated channel. In fact, Figure \ref{cos_op} show that the energy consumption curve is S-shaped where the consumed energy increase quickly for lower values of $\gamma$ and tends to be unchanged for higher values $\gamma$.
\begin{figure}[!h]
  \includegraphics[width=0.55\textwidth]{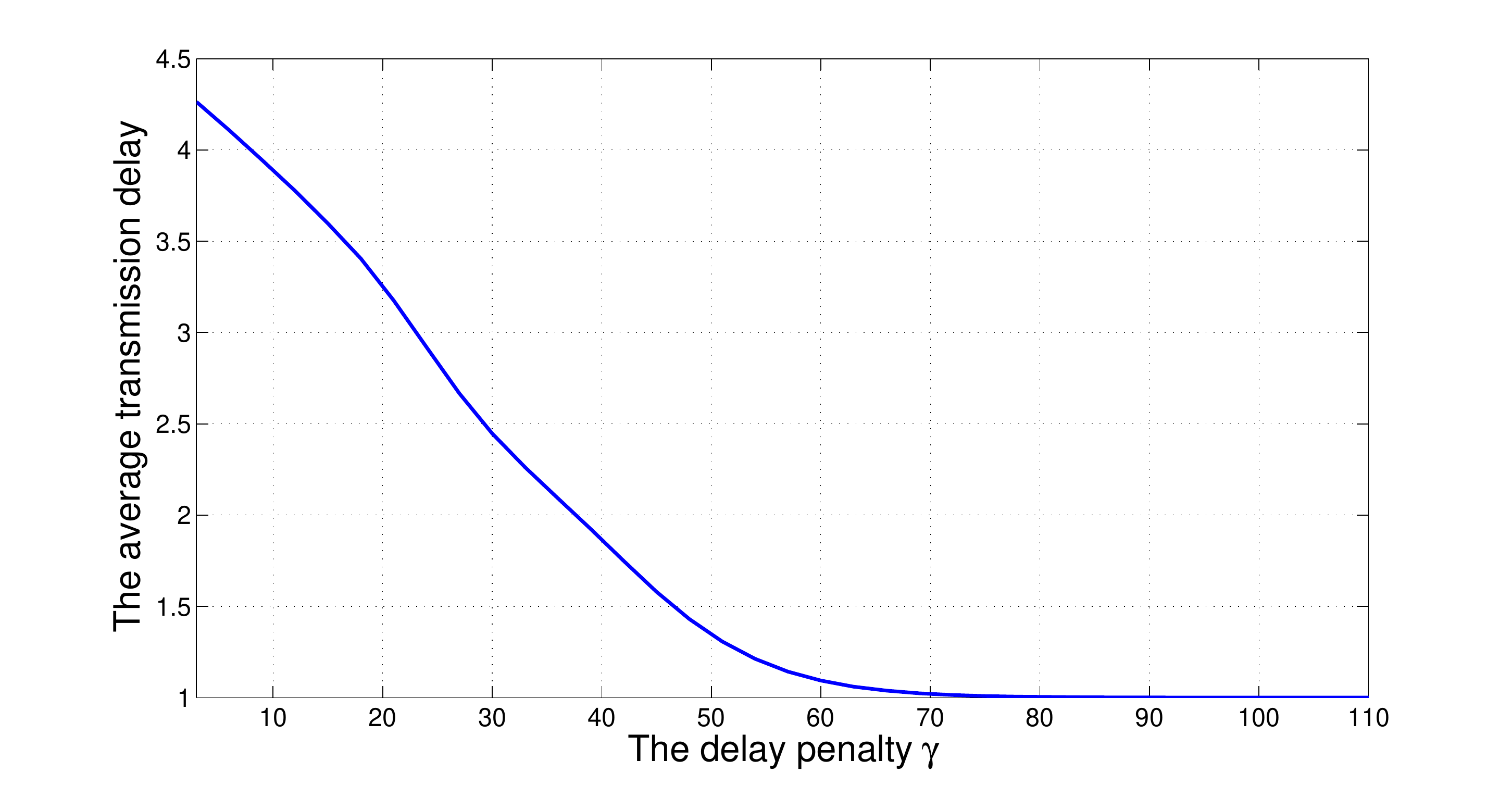}
  \caption{The average packet delay depending on the delay penalty $\gamma$, $\alpha=0.15$ and $\beta=0.1$}\label{del_op}
\end{figure}

\begin{figure}[!h]
  \includegraphics[width=0.55\textwidth]{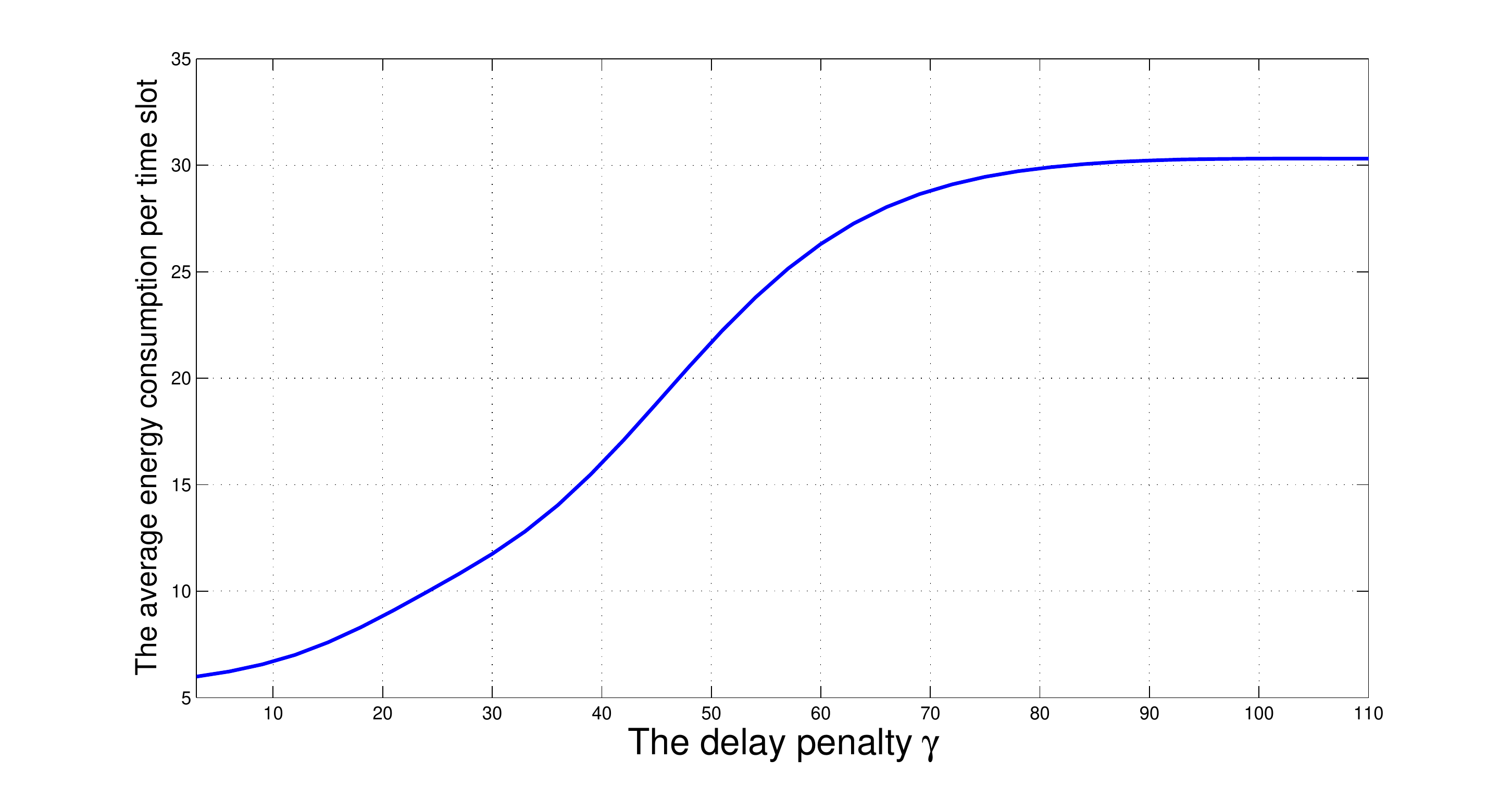}
  \caption{The average energy consumption per time slot depending on the delay penalty $\gamma$, $\alpha=0.15$ and $\beta=0.1$}\label{cos_op}
\end{figure}

\subsubsection{Optimal policy vs Memoryless policies}

We compare the performance of the optimal policy obtained by Algorithm \ref{algo1} with memoryless policies (MP) which are defined as follows: The SU senses and  transmits if the channel is idle. A memoryless policy is characterized by the number of attempts (always finding an occupied channel) before using the dedicated channel. For example, using the memoryless policy denoted (MP-3),  the SU senses the channel and transmits if the channel if idle, otherwise he waits for the next time slot until the packet delay equals to 3, then he transmits using the dedicated channel if the unlicensed channel is occupied. For each memoryless policy, we determine the average delay and the average energy consumption per packet. Note that we are considering several MP because every MP allows SUs to obtain a given QoS, and we are trying to evaluate the performance of our proposed policy for different values of the QoS.
In fact, our goal is to illustrate the gain of energy consumption using our optimal policy compared to  memoryless policies, when considering the same quality-of-service, i.e. average delay here. Thus, we tune the delay penalty $\gamma$ such that our optimal policy has the same average delay as the MP. The percentage of the average cost reduction (sensing cost and transmission cost) per packet when using the optimal threshold policy is compared to MP in Figure \ref{del_diff}, for different values of the average delays. We observe that for our proposed policy the average cost reduction is higher compared to MP (up to 50\%). Indeed, our policy is well adapted for applications that require hard transmission delays.
\begin{figure}[!h]
  \includegraphics[width=0.55\textwidth]{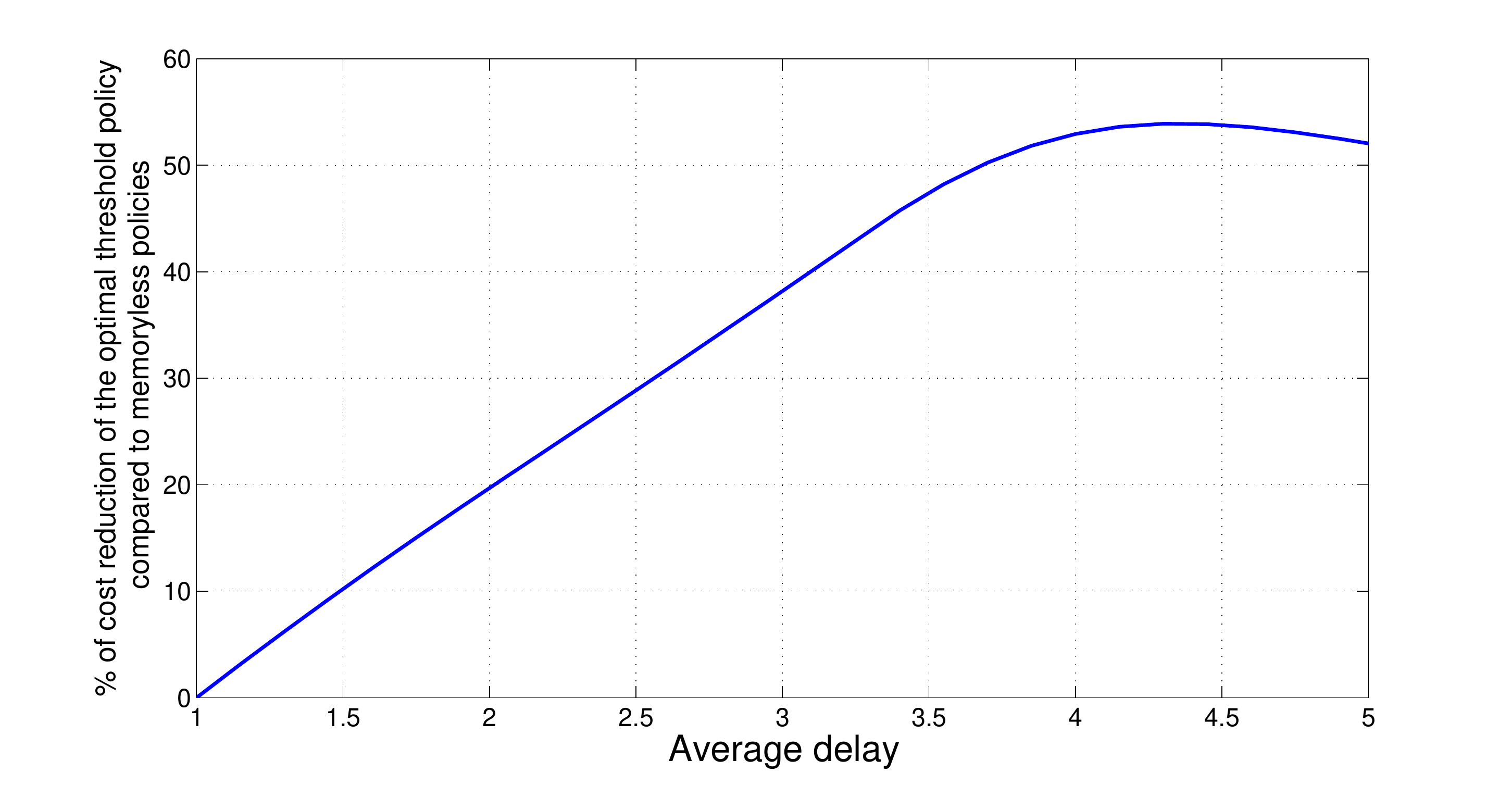}
  \caption{The difference of energy consumption between the optimal policy and the memoryless policy depending on the average delay.}\label{del_diff}
\end{figure}

\section{Conclusion and perspectives}
In this paper, we have used a POMDP framework for determining an optimal OSA policy  taking into account an energy-delay tradeoff for SUs. Introducing a QoS metric in the spectrum sensing policy is very important with the emergence of heterogeneous mobiles that are able to transmit their traffic with possible high QoS, at any time over different ways of communication like 3G, WiFi and  TV White Space. We have provided some structural properties of the value function and then proved the existence of an optimal average stationary OSA policy. We have been able to determine explicitly the threshold structure of the optimal policy. Moreover, we have proposed a learning mechanism that determine the OSA policy on the fly.
There exists several OSA mechanisms in the literature and it is important for the community to design a generic framework in order to compare all existing approaches for OSA in cognitive radio networks, this is part of our future works. Furthermore,
we have considered a perfect sensing model where the SU senses the channel in a way to ensure that the PU is present or not. Mis-detection can be also integrated to our framework. Finally, the interaction between several SUs has not been considered here, and in the literature very few. This perspective is also very important because if the channel choice policy is the same for all the SUs, there could have lots of collisions between several SUs that have sensed the same idle primary channel. This decentralized system with partial information can be modeled using decentralized-POMDP or interactive-POMDP and will be studied in future works.

\appendix
\subsection{Proof of Lemma \ref{beliefupdate}}\label{apbeliefupdate}
First, the update function $\Omega^{ns}$ is linear with the belief because because $\Omega^{ns}(\lambda)=\beta+(\alpha-\beta)\lambda$. As we considered the case where $\alpha\geq\beta$, then the update function is increasing with the belief.

Second, let us prove that $\Omega^{ns}(\lambda)\geq\lambda$ if $\lambda\leq\pi(0)$ by induction on the belief.
\begin{enumerate}
\item We have the initial condition: $\beta \leq \pi(0)=\frac{\beta}{1-\alpha+\beta}$ and $\Omega^{ns}(\beta)=\beta+(\alpha-\beta)\beta\geq\beta$.
\item We assume that $\Omega^{ns}(\lambda)\geq\lambda$ for a given $\lambda\leq\pi(0)$.
\item The induction operator gives: $\Omega^{ns}(\Omega^{ns}(\lambda))=\beta+(\alpha-\beta)\Omega^{ns}(\lambda)\geq \beta+(\alpha-\beta)\lambda=\Omega^{ns}(\lambda)$.
\end{enumerate}
Thus, $\Omega^{ns}(\lambda)\geq\lambda$ for all $\lambda\leq\pi(0)$. The analysis for $\lambda\geq\pi(0)$ is similar.
\subsection{Proof of Proposition \ref{pwlc}}\label{appwlc}
The proof of the proposition \ref{pwlc} is similar to \cite{sondik} where the authors consider the finite time horizon problem. Hence, we briefly describe the procedure for this proof. Considering the maximum packet delay $l^*$ and for all belief vector $\lambda$, the value function $V(\vec{\lambda},l^*)$ is linear with the belief because

   {\scriptsize{
\begin{eqnarray*}
V(\vec{\lambda},l^*)&=&Q_2(\vec{\lambda},l^*)-g_u,\\
&=&-g_u+\Phi-c_s-P_{3G}+V(\Omega^{s}(\vec{\lambda}|\theta=1),1)+\\
\nonumber &&\lambda_{n^*}(P_{3G}-P_p+V(\Omega^{s}(\vec{\lambda}|\theta=0),1)-V(\Omega^{s}(\vec{\lambda}|\theta=1),1)).
\end{eqnarray*}}}
Then the value function $V(\vec{\lambda},l^*)$ can be rewritten as an inner product of the belief vector and a $\Upsilon$-vector. As $Q_2(\vec{\lambda},l)=Q_2(\vec{\lambda},l^*)$, for all $l$, the action-value function $Q_2(\vec{\lambda},l)$ can be also rewritten as an inner product of the belief vector and a $\Upsilon$-vector. We suppose that Proposition \ref{pwlc} holds for all packet delays higher than $l+1$ and we prove that the proposition is true for packet delay $l$. After some algebra, we can rewrite the action-value functions given in (\ref{maxexp0}) and (\ref{maxexp1}) in terms of $\Upsilon$-vector:
\begin{eqnarray}
\nonumber Q_0(\vec{\lambda},l)&=&-f(l)+\max\limits_{\Upsilon\in\Gamma_{l+1}}<\Omega^{ns}(\vec{\lambda}|\theta),\Upsilon>\\&=& -f(l)+\sum\limits_{s\in\mathcal{S}}\omega_{s}\left[\sum\limits_{s'\in\mathcal{S}}P(s'|s)\Upsilon_{l+1}^{\Omega^{ns}(\vec{\lambda}|\theta)}\right],
\label{q0}
\end{eqnarray}
and
{\small{
\begin{eqnarray}
\nonumber Q_1(\vec{\lambda},l)&=&-c_s+\lambda_{n^*}(\phi-P_p+V({\Omega^{s}(\vec{\lambda}|\theta=0)},1))+(1-\lambda_{n^*})\\\nonumber&&(-f(l)+\max\limits_{\Upsilon\in\Gamma_{l+1}}<{\Omega^{s}(\vec{\lambda}|\theta=1)},\Upsilon>)\\\nonumber
&=&-c_s+\lambda_{n^*}(\phi-P_p+V(\Omega^{s}(\vec{\lambda}|\theta=0),1))+(1-\lambda_{n^*})\\\nonumber&&(-f(l)+\sum\limits_{s\in\mathcal{S}}\omega_{s}\left[\sum\limits_{s'\in\mathcal{S}}P(s'|s)\Upsilon_{l+1}^{\Omega^{s}(\vec{\lambda}|\theta=1)}\right]),
\label{q1}
\end{eqnarray}}}
where $\Upsilon_{l+1}^{\Omega^{ns}(\vec{\lambda}|\theta)}$ and $\Upsilon_{l+1}^{\Omega^{s}(\vec{\lambda}|\theta=1)}$ are, respectively, the $\Upsilon$-vectors for the regions containing belief vectors $\Omega^{ns}(\vec{\lambda}|\theta)$ and $\Omega^{s}(\vec{\lambda}|\theta=1)$, respectively. Each term in the square brackets of (\ref{q0}) and (\ref{q1}) are elements $\Upsilon_{\lambda,l}$ of  a $\Upsilon$-vector $\Upsilon_{l}$. Then the action-value functions can be rewritten as an inner product of the belief vector and a $\Upsilon$-vector $\Upsilon_{l}$. Moreover, there are only a finite number of such $\Upsilon$-vector  $\Upsilon_{l}$ since we have a finite set of belief for all $l$. As the maximum of a finite set of piecewise linear and convex functions is also piecewise linear and convex, the Proposition \ref{pwlc} holds.
\subsection{Proof of Proposition \ref{monotonicityl}}\label{apmonotonicityl}
Let us prove first that the value function $V(\vec{\lambda},l)$ is monotonically decreasing with the packet delay $l$ for all belief vector $\vec{\lambda}$. The SU takes the action $2$ for all $\vec{\lambda}$ when the packet delay is $l^*$, thus we have:
\begin{eqnarray*}
V(\vec{\lambda},l^*)&=&\Phi-c_s+\lambda_{n^*}(-P_p+V(\Omega^{s}(\vec{\lambda}|\theta=0),1))+(1-\lambda_{n^*})\\&&(-P_{3G}+V(\Omega^{s}(\vec{\lambda}|\theta=1),1)).
\end{eqnarray*}
The SU chooses the action that maximizes its average utility and thus:

{\small{
\begin{eqnarray*}
V(\vec{\lambda},l^*-1)&=&\\\max_{a}Q_a(\vec{\lambda},l^*-1)-g_u&\geq& Q_2(\vec{\lambda},l^*-1)-g_u,\\
&=&\Phi-c_s+\lambda_{n^*}(-P_p+V(\Omega^{s}(\vec{\lambda}|\theta=0),1))\\&+&(1-\lambda_{n^*})(-P_{3G}+V(\Omega^{s}(\vec{\lambda}|\theta=1),1))-g_u,\\
&=&V(\vec{\lambda},l^*).
\end{eqnarray*}}}

Let us prove that this propriety holds for all packet delays using a backward induction on $l$:
\begin{enumerate}
\item initial condition: For all belief vector $\lambda$, $V(\vec{\lambda},l^*)\leq V(\vec{\lambda},l^*-1)$,
\item we suppose that $V(\vec{\lambda},l+2)\leq V(\vec{\lambda},l+1)$, $\forall \vec{\lambda}$.
\item We have:
{\small{
\begin{eqnarray*}
\nonumber Q_0(\vec{\lambda},l)&=&-f(l)+V(\Omega^{ns}(\vec{\lambda}|\theta),l+1),\\
\nonumber&\geq&-f(l+1)+V(\Omega^{ns}(\vec{\lambda}|\theta),l+2),\\
\nonumber &=&Q_0(\vec{\lambda},l+1).\\
\nonumber Q_1(\vec{\lambda},l)&=&-c_s+\lambda_{n^*}(\Phi-P_p+V(\Omega^{s}(\vec{\lambda}|\theta=0),1))\\\nonumber&&+(1-\lambda_{n^*})(-f(l)+V(\Omega^{s}(\vec{\lambda}|\theta=1),l+1)),\\
\nonumber&\geq&-c_s+\lambda_{n^*}(\Phi-P_p+V(\Omega^{s}(\vec{\lambda}|\theta=0),1))\\\nonumber&&+(1-\lambda_{n^*})(-f(l+1)+V(\Omega^{s}(\vec{\lambda}|\theta=1),l+2)),\\
\nonumber &=&Q_1(\vec{\lambda},l+1).\\
\nonumber Q_2(\vec{\lambda},l)&=&-c_s+\Phi-P_{3G}+V(\Omega^{s}(\vec{\lambda}|\theta=1),1)\\\nonumber&&+\lambda_{n^*}(P_{3G}-P_p+V(\Omega^{s}(\vec{\lambda}|\theta=0),1)-V(\Omega^{s}(\vec{\lambda}|\theta=1),1)),\\
\nonumber &\geq&Q_2(\vec{\lambda},l+1).
\end{eqnarray*}}}
The inequalities come from the induction assumption and the monotonicity of the penalty function $f(l)$. Thus, we have:
$\forall \lambda, \quad V(\lambda,l)\geq V(\lambda,l+1)$.
\end{enumerate}
The value function is therefore decreasing with the packet delay.

\begin{lemma}\label{cogvs3g}
We have the following inequality: $$-P_p+V(\alpha,1)\geq-P_{3G}+V(\beta,1).$$
\end{lemma}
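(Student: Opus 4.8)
The plan is to derive this as an immediate consequence of the monotonicity of the value function in the belief (Proposition~\ref{monotonicityb}) together with the standing economic assumption that transmitting over the dedicated channel is more expensive than transmitting over a licensed primary channel, i.e. $P_{3G}\geq P_p$ (recall $P_{3G}=800$, $P_p=100$ in the numerical section, and the SU ``prefers to transmit on a licensed primary channel because it is cheaper'').

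The key steps, in order, are as follows. First, I would invoke the standing assumption $\alpha\geq\beta$ of this section, so that $\beta$ and $\alpha$ are two comparable belief values with $\beta\leq\alpha$. Second, applying Proposition~\ref{monotonicityb} (the value function is increasing in the belief, for fixed delay) at delay $l=1$ gives $V(\alpha,1)\geq V(\beta,1)$. Third, subtracting $P_p$ from both sides yields $-P_p+V(\alpha,1)\geq -P_p+V(\beta,1)$. Finally, since $P_{3G}\geq P_p$ we have $-P_p\geq -P_{3G}$, hence $-P_p+V(\beta,1)\geq -P_{3G}+V(\beta,1)$, and chaining the two inequalities gives
\[
-P_p+V(\alpha,1)\ \geq\ -P_p+V(\beta,1)\ \geq\ -P_{3G}+V(\beta,1),
\]
which is the claim.

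There is essentially no genuine obstacle here; the only thing to be careful about is that Proposition~\ref{monotonicityb} is established only for the single-channel case, so the lemma should be read (and is used) in that setting, where the belief is the scalar $\lambda$ and $\alpha,\beta$ are the two post-observation belief values $\Omega^{s}(\lambda\,|\,\theta=0)=\alpha$ and $\Omega^{s}(\lambda\,|\,\theta=1)=\beta$. Both ingredients ($\alpha\geq\beta$ and $P_{3G}\geq P_p$) are explicit standing assumptions, so the argument is just the two-line chain above. Intuitively, the inequality says that after sensing a channel it is always at least as good to transmit on it when it is idle (paying $P_p$ and landing in the favorable belief state $\alpha$) than to be forced onto the dedicated channel (paying $P_{3G}$ and landing in the less favorable belief state $\beta$).
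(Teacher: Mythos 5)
Your argument is circular within the paper's logical structure. The monotonicity of $V$ in the belief (Proposition~\ref{monotonicityb}) is itself proved \emph{using} Lemma~\ref{cogvs3g}: the base case of the backward induction in Appendix~\ref{apmonotonicityb} writes $V(\lambda,l^*)$ as an affine function of $\lambda$ with slope $P_{3G}-P_p+V(\alpha,1)-V(\beta,1)$, and the nonnegativity of that slope --- which is exactly the content of Lemma~\ref{cogvs3g} --- is what makes $V(\cdot,l^*)$ increasing; the same lemma is invoked again for $Q_1$ and $Q_2$ at every induction step. So you cannot appeal to Proposition~\ref{monotonicityb} here without first having the lemma. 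This is not a cosmetic ordering issue: any attempt to prove belief-monotonicity along the lines of the paper runs immediately into the sign of $P_{3G}-P_p+V(\alpha,1)-V(\beta,1)$, so the circularity cannot be patched by simply reordering the statements. Note also that $P_{3G}\geq P_p$ alone does not settle the matter, precisely because $V(\alpha,1)\geq V(\beta,1)$ is not yet available.

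The paper's actual proof avoids this by arguing by contradiction using only the monotonicity in the delay (Proposition~\ref{monotonicityl}) and the update properties of Lemma~\ref{beliefupdate}. Assuming $-P_p+V(\alpha,1)<-P_{3G}+V(\beta,1)$, it shows by backward induction on $l$ that the optimal action is then $a=2$ in every reachable state, which turns the Bellman equation into an explicit linear system for $V(\alpha,1)-V(\beta,1)$. That system yields simultaneously $V(\alpha,1)-V(\beta,1)=(\alpha-\beta)\bigl(P_{3G}-P_p+V(\alpha,1)-V(\beta,1)\bigr)<0$ under the negated hypothesis and $V(\alpha,1)-V(\beta,1)=\frac{(\alpha-\beta)(P_{3G}-P_p)}{1-\alpha+\beta}>0$ after solving, a contradiction. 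To repair your proof you would need either to reproduce an argument of this kind or to supply an independent proof of belief-monotonicity that does not pass through the sign of $P_{3G}-P_p+V(\alpha,1)-V(\beta,1)$; as written, the proposal assumes what it must prove.
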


\subsection*{Proof of Lemma \ref{cogvs3g}}\label{apcogvs3g}

We prove this lemma by contradiction, so we suppose that $-P_p+V(\alpha,1)<-P_{3G}+V(\beta,1)$.
We first prove that the following:
{\scriptsize{
\begin{eqnarray}
\nonumber g_u+V(\alpha,1)&\geq& Q_2(\alpha,1),\\
\nonumber g_u+V(\alpha,1)&\geq& -c_s+\alpha(\phi-P_p+V(\alpha,1))+(1-\alpha)(\phi-P_{3G}+V(\beta,1)),\\
\nonumber g_u+V(\alpha,1)&\geq& -c_s+\phi-P_p+V(\alpha,1),\\
\nonumber g_u&>&\Phi-c_s-P_p.
\end{eqnarray}}}
and we take the assumption that the immediate reward when the channel is idle is positive, i.e. $\Phi-c_s-P_p\geq0$.
We know that the SU takes the action 2 in the state  $(\lambda,l^*)$ for all belief vector $\lambda$, i.e $a^*(\lambda,l^*)=2, \forall \lambda$. We have:
{\scriptsize{
\begin{eqnarray*}
\nonumber g_u+V(\lambda,l^*)&=&-c_s+\lambda(\phi-P_p+V(\alpha,1))+(1-\lambda)(\phi-P_{3G}+V(\beta,1)).
\end{eqnarray*}}}
Let us focus on the packet delay $l^*-1$. If $\lambda\leq\pi(0)$,  we have:
{\small{
\begin{eqnarray*}
\nonumber Q_0(\lambda,l^*-1)&=&-f(l^*-1)+V(\Omega^{ns}(\lambda),l^*),\\
\nonumber &=&-g_u-f(l^*-1)-c_s+\Omega^{ns}(\lambda)(\phi-P_p+V(\alpha,1))\\\nonumber&+&(1-\Omega^{ns}(\lambda))(\phi-P_{3G}+V(\beta,1)),\\
\nonumber &=&V(\lambda,l^*)-f(l^*-1)+(\Omega^{ns}(\lambda)-\lambda)(P_{3G}-P_p\\\nonumber&&+V(\alpha,1)-V(\beta,1)),\\
\nonumber &<&V(\lambda,l^*).
\end{eqnarray*}}}
The inequality is due to the assumption that $-P_p+V(\alpha,1)<-P_{3G}+V(\beta,1)$, $\Omega^{ns}(\lambda)\geq\lambda$ and $f(l^*-1)$ is positive. As the value function $V(\lambda,l)$ is decreasing with the packet delay $l$ (see Proposition \ref{monotonicityl}), then $Q_0(\lambda,l^*-1)<V(\lambda,l^*)<V(\lambda,l^*-1)$. As we proved that $g_u\geq0$, the SU does not take the action $0$ when the packet delay is $l^*-1$. For the action 1, we have:
{\small{
\begin{eqnarray*}
\nonumber Q_1(\lambda,l^*-1)&=&-c_s+\lambda(\phi-P_p+V(\alpha,1))\\\nonumber&&+(1-\lambda)(-f(l^*-1)+V(\beta,l^*)),\\
\nonumber &=&-c_s+\lambda(\phi-P_p+V(\alpha,1))+(1-\lambda)\\\nonumber&&\left(\right.\phi-g_u-f(l^*-1)-c_s +\beta(-P_p+V(\alpha,1))\\
\nonumber &&\left.+(1-\beta)(-P_{3G}+V(\beta,1))\right),\\
\nonumber &<&-c_s+\lambda(\phi-P_p+V(\alpha,1))+(1-\lambda)\\\nonumber&&(\phi-g_u-f(l^*-1)-c_s-P_{3G}+V(\beta,1)),\\
\nonumber &<&-c_s+\lambda(\phi-P_p+V(\alpha,1))\\\nonumber&&+(1-\lambda)(\phi-P_{3G}+V(\beta,1)),\\
\nonumber &=&Q_2(\lambda,l^*-1).
\end{eqnarray*}}}
The first inequality is due to the assumption that $-P_p+V(\alpha,1)<-P_{3G}+V(\beta,1)$ and the second one is because $g_u$, $f(l^*-1)$ and $c_s$ are positive. Thus, the optimal strategy is to take the action 2 when the packet delay is $l^*-1$.


Let us prove now by backward induction on $l$ that the optimal action is the action 2 for all belief vector $\lambda \leq \pi(0)$.
\begin{itemize}
\item If the SU takes the action 2 when the packet delay is $l^*$, then it takes also the action 2 when the packet delay is $l^*-1$.
\item We suppose that SU takes the action 2 when the packet delay is $l<l^*-1$.
\item We have the following inequalities:

{\scriptsize{
\begin{eqnarray*}
\nonumber Q_0(\lambda,l-1)&=&-f(l-1)+V(\Omega^{ns}(\lambda),l),\\
\nonumber &=&-g_u-f(l-1)-c_s+\Omega^{ns}(\lambda)(\phi-P_p+\\\nonumber&&V(\alpha,1))+(1-\Omega^{ns}(\lambda))(\phi-P_{3G}+V(\beta,1)),\\
\nonumber &=&V(\lambda,l)-f(l-1)+(\Omega^{ns}(\lambda)\\\nonumber&&-\lambda)(P_{3G}-P_p+V(\alpha,1)-V(\beta,1)),\\
\nonumber &<&V(\lambda,l).
\end{eqnarray*}}}

The inequality is due to the assumption that $-P_p+V(\alpha,1)<-P_{3G}+V(\beta,1)$ and $\Omega^{ns}(\lambda)\geq\lambda$, and $f(l-1)$ is positive. As the value function is decreasing with the packet delay (see Proposition \ref{monotonicityl}), then $Q_0(\lambda,l-1)<V(\lambda,l-1)+g_u$, i.e. the SU does not take the action $0$ with the packet delay $l-1$.

{\small{
\begin{eqnarray*}
\nonumber Q_1(\lambda,l-1)&=&-c_s+\lambda(\phi-P_p+V(\alpha,1))\\\nonumber&&+(1-\lambda)(-f(l-1)+V(\beta,l)),\\
\nonumber &=&-c_s+\lambda(\phi-P_p+V(\alpha,1))\\\nonumber&&+(1-\lambda)\left(\right.\phi-g_u-f(l-1)-c_s
 +\beta(-P_p\\\nonumber &&\left.+V(\alpha,1))+(1-\beta)(-P_{3G}+V(\beta,1))\right),\\
\nonumber &<&-c_s+\lambda(\phi-P_p+V(\alpha,1))+(1-\lambda)\\\nonumber&&(\phi-g_u-f(l-1)-c_s-P_{3G}+V(\beta,1)),\\
\nonumber &<&-c_s+\lambda(\phi-P_p+V(\alpha,1))\\\nonumber&&+(1-\lambda)(\phi-P_{3G}+V(\beta,1)),\\
\nonumber &=&Q_2(\lambda,l-1).
\end{eqnarray*}}}

The first inequality is due to the assumption that $-P_p+V(\alpha,1)<-P_{3G}+V(\beta,1)$ and the second one is because $g_u$, $f(l-1)$ and $c_s$ are positive. Thus, The optimal strategy is to take action 2 when the packet delay is $l-1$.
Thus, the SU does not take the action $1$ with the packet delay $l-1$. Finally, the SU takes action 2 for all packet delays  and beliefs lower than $\pi(0)$.

\end{itemize}

We now look at the action-value function $Q_2(\alpha,1)$ when the packet delay is $l=1$.
{\small{\begin{eqnarray*}
\nonumber Q_2(\alpha,1)&=&-c_s+\alpha(\phi-P_p+V(\alpha,1))\\\nonumber&&+(1-\alpha)(\phi-P_{3G}+V(\beta,1)),\\
\nonumber Q_2(\alpha,1)&=&\phi-c_s-P_{3G}+V(\beta,1)\\\nonumber&&+\alpha(P_{3G}-P_p+V(\alpha,1)-V(\beta,1)),\\
\nonumber -g_u+Q_2(\alpha,1)&=&-g_u+V(\alpha,1)-P_p+\phi-c_s\\\nonumber&&+(\alpha-1)(P_{3G}-P_p+V(\alpha,1)-V(\beta,1)).
\end{eqnarray*}}}
As the SU takes the action 2 also for the state $(\beta,1)$, we have:
{\small{\begin{eqnarray*}
\nonumber g_u+V(\beta,1)&=&-c_s+\beta(\phi-P_p+V(\alpha,1))\\\nonumber&&+(1-\beta)(\phi-P_{3G}+V(\beta,1)),\\
\nonumber g_u+V(\beta,1)&=&\phi-c_s-P_{3G}+V(\beta,1)\\\nonumber&&+\beta(P_{3G}-P_p+V(\alpha,1)-V(\beta,1)),\\
\nonumber g_u&=&\phi-c_s-P_{3G}+\beta(P_{3G}\\\nonumber&&-P_p+V(\alpha,1)-V(\beta,1)).
\end{eqnarray*}}}
Thus, we obtain:
{\small{
\begin{eqnarray*}
\nonumber -g_u+Q_2(\alpha,1)&=&V(\alpha,1)+P_{3G}-P_p+(\alpha-\beta-1)\\\nonumber&&(P_{3G}-P_p+V(\alpha,1)-V(\beta,1)).
\end{eqnarray*}}}
As we assumed that $P_{3G}-P_p+V(\alpha,1)-V(\beta,1)<0$, and $P_{3G}>P_p$, then we obtain $V(\alpha,1)+g_u\leq Q_2(\alpha,1)$ and therefore the SU takes also the action 2 in the state $(\alpha,1)$. Then we get:
{\scriptsize{
$$
g_u+V(\alpha,1)=Q_2(\alpha,1)=-c_s+\alpha(\phi-P_p+V(\alpha,1))+(1-\alpha)(\phi-P_{3G}+V(\beta,1)).
$$}}
 Let us evaluate finally the difference $V(\alpha,1)-V(\beta,1)$:
{\small{\begin{eqnarray*}
\nonumber V(\alpha,1)-V(\beta,1)&=&(\alpha-\beta)(P_{3G}-P_p+V(\alpha,1)-V(\beta,1)),\\
\nonumber V(\alpha,1)-V(\beta,1)&<&0.
\end{eqnarray*}}}
and
{\small{\begin{eqnarray*}
\nonumber V(\alpha,1)-V(\beta,1)&=&(\alpha-\beta)(P_{3G}-P_p+V(\alpha,1)-V(\beta,1)),\\
\nonumber (V(\alpha,1)-V(\beta,1))(1-\alpha+\beta)&=&(\alpha-\beta)(P_{3G}-P_p),\\
\nonumber V(\alpha,1)-V(\beta,1)&=&\frac{(\alpha-\beta)(P_{3G}-P_p)}{1-\alpha+\beta},\\
&>&0.
\end{eqnarray*}}}
which leads to a contradiction, and therefore, $-P_p+V(\alpha,1)\geq-P_{3G}+V(\beta,1)$. The analysis is similar when $\lambda > \pi(0)$.

\subsection{Proof of Proposition \ref{monotonicityb}}\label{apmonotonicityb}
Let us prove that the value function $V(\lambda,l)$ is increasing with the belief vector $\lambda$ for any packet delay $l$. For all $\lambda_1\leq\lambda_2$, we have that:
{\small{\begin{eqnarray*}
\nonumber V(\lambda_1,l^*)&=&-g_u-c_s+\Phi-P_{3G}+V(\beta,1)\\\nonumber&&+\lambda_1(P_{3G}-P_p+V(\alpha,1)-V(\beta,1)),\\
\nonumber&\leq&-g_u-c_s+\Phi-P_{3G}+V(\beta,1)\\\nonumber&&+\lambda_2(P_{3G}-P_p+V(\alpha,1)-V(\beta,1)),\\
\nonumber &=&V(\lambda_2,l^*).
\end{eqnarray*}}}
This inequality result from the Lemma \ref{cogvs3g}. Let us prove that this propriety holds for all packet delays $l$ using backward induction:
\begin{itemize}
\item Initial condition: There exists a packet delay $l^*$ such that $V(\lambda_1,l^*)\leq V(\lambda_2,l^*),$ $\forall \lambda_1\leq\lambda_2$,
\item We suppose that $V(\lambda_1,l+1)\leq V(\lambda_2,l+1),$ $\forall \lambda_1\leq\lambda_2$,
\item
\paragraph*{First case} We assume that $\Phi+f(l)-P_p+V(\alpha,1)-V(\beta,l+1)\geq0$, then:
\begin{eqnarray*}
\nonumber Q_0(\lambda_1,l)&=&-f(l)+V(\Omega^{ns}(\lambda_1|\theta),l+1),\\
\nonumber&\leq&-f(l)+V(\Omega^{ns}(\lambda_2|\theta),l+1),\\
\nonumber &=&Q_0(\lambda_2,l).
\end{eqnarray*}
The inequality is a direct result from the induction assumption and the Lemma \ref{beliefupdate}. We have also:
\begin{eqnarray*}
\nonumber Q_1(\lambda_1,l)&=&-c_s-f(l)+V(\beta,l+1)\\\nonumber&&+\lambda_1(\Phi+f(l)-P_p+V(\alpha,1)-V(\beta,l+1)),\\
\nonumber&\leq&-c_s-f(l)+V(\beta,l+1)\\\nonumber&&+\lambda_2(\Phi+f(l)-P_p+V(\alpha,1)-V(\beta,l+1)),\\
\nonumber &=&Q_1(\lambda_2,l).
\end{eqnarray*}
\begin{eqnarray*}
\nonumber Q_2(\lambda_1,l)&=&-c_s+\Phi-P_{3G}+V(\beta,1)\\\nonumber&&+\lambda_1(P_{3G}-P_p+V(\alpha,1)-V(\beta,1)),\\
\nonumber&\leq&-c_s+\Phi-P_{3G}+V(\beta,1)\\\nonumber&&+\lambda_2(P_{3G}-P_p+V(\alpha,1)-V(\beta,1)),\\
\nonumber &=&Q_2(\lambda_2,l).
\end{eqnarray*}
The inequalities comes from the Lemma \ref{cogvs3g}. Thus, we have proved that $V(\lambda_1,l)\leq V(\lambda_2,l)$.

\paragraph*{Second case} We suppose that $\Phi+f(l)-P_p+V(\alpha,1)-V(\beta,l+1)<0$, then for all $\lambda$ we have:
\begin{eqnarray*}
\nonumber Q_1(\lambda,l)&=&-c_s+\lambda(\phi-P_p+V(\alpha,1))\\\nonumber&&+(1-\lambda)(-f(l)+V(\beta,l+1)),\\
\nonumber&\leq&-c_s-f(l)+V(\beta,l+1),\\
\nonumber &\leq&-f(l)+V(\beta,l+1),\\
\nonumber &\leq&-c_s-f(l)+V(\Omega^{ns}(\lambda|\theta),l+1),\\
\nonumber &\leq&Q_0(\lambda,l).
\end{eqnarray*}
In fact, we have that $\beta\leq\Omega^{ns}(\lambda|\theta)$ for all belief vector $\lambda$ and the value function $V(\lambda,l)$ is increasing with the belief for the packet delay $l+1$ (induction assumption). Thus, $g_u+V(\lambda,l)=\max{\{Q_0(\lambda,l),Q_2(\lambda,l)\}}$. Moreover, we have:
\begin{eqnarray*}
\nonumber Q_0(\lambda_1,l)&=&-f(l)+V(\Omega^{ns}(\lambda_1|\theta),l+1),\\
\nonumber&\leq&-f(l)+V(\Omega^{ns}(\lambda_2|\theta),l+1),\\
\nonumber &=&Q_0(\lambda_2,l).
\end{eqnarray*}
The inequality is a direct result from the induction assumption. Finally, we have that:
\begin{eqnarray*}
\nonumber Q_2(\lambda_1,l)&=&-c_s+\Phi-P_{3G}+V(\beta,1)\\\nonumber&&+\lambda_1(P_{3G}-P_p+V(\alpha,1)-V(\beta,1)),\\
\nonumber&\leq&-c_s+\Phi-P_{3G}+V(\beta,1)\\\nonumber&&+\lambda_2(P_{3G}-P_p+V(\alpha,1)-V(\beta,1)),\\
\nonumber &=&Q_2(\lambda_2,l).
\end{eqnarray*}
The inequality comes from the Lemma \ref{cogvs3g}.
\end{itemize}
Thus, $V(\lambda_1,l)\leq V(\lambda_2,l)$ for belief vectors $\lambda_1 \leq \lambda_2$ and for all packet delay $l$.
\subsection{Proof of Proposition \ref{thr}}\label{apthr}
In this proposition, we determine explicitly the best action $a^*(\lambda,l)$ for the SU depending on the belief $\lambda$ and the packet delay $l$. At each time slot and for a given information state $(\lambda,l)$, the secondary use will decide to take the action $0$ if $Q_0(\lambda,l)\geq \max{\{Q_{1}(\lambda,l),Q_{2}(\lambda,l)\}}$.
\begin{itemize}
\item First we assume that $Q_1(\lambda,l)>Q_2(\lambda,l)$, then, let us compare $Q_0(\lambda,l)$ and $Q_{1}(\lambda,l)$. The inequality $Q_0(\lambda,l)\geq Q_{1}(\lambda,l)$ is equivalent to:

{\small{\begin{eqnarray*}
\nonumber  -f(l)+V(\Omega^{ns}(\lambda|\theta),l+1) &\geq& -c_s+\lambda(\Phi-P_p+V(\alpha,1))\\\nonumber&&+(1-\lambda)(-f(l)+V(\beta,l+1)), \\
\nonumber V(\Omega^{ns}(\lambda|\theta),l+1) &\geq&  V(\beta,l+1)-c_s+\lambda(f(l)+\\\nonumber&&\Phi-P_p+V(\alpha,1)-V(\beta,l+1)).
\end{eqnarray*}}}

As the value function $V(\lambda,l) $is decreasing with the packet delay $l$ and increasing with the belief $\lambda$, we have $V(\alpha,1)\geq V(\beta,l+1)$. As we assumed that the immediate reward $\phi$ is higher than the  cost $P_p$, we obtain that $f(l)+\Phi-P_p+V(\alpha,1)-V(\beta,l+1)$ is positive. Then, we have the following equivalence:
{\small{\begin{eqnarray*}
Q_0(\lambda,l)&\geq& Q_{1}(\lambda,l) \Leftrightarrow\\ V(\Omega^{ns}(\lambda|\theta),l+1) &\geq&  V(\beta,l+1)-c_s+\lambda(f(l)+\Phi-\\&&P_p+V(\alpha,1)-V(\beta,l+1)).
\end{eqnarray*}}}
Define the functions F and G as follow:
\begin{eqnarray*}
\nonumber F(\lambda,l)&=&{V(\Omega^{ns}(\lambda|\theta),l+1)},\\
\nonumber G(\lambda,l)&=&V(\beta,l+1)-c_s+\lambda(f(l)+\Phi\\\nonumber&&-P_p+V(\alpha,1)-V(\beta,l+1)).
\end{eqnarray*}

We proved in Proposition \ref{pwlc} that the value function is Piecewise linear and convex. Therefore, for all packet delays, the function $F(\lambda,l)$ is PWLC and increasing with $\lambda$ , and the function $G(\lambda,l)$ is linear and increasing with $\lambda$. Note that
\begin{itemize}
\item If $F(\lambda,l)\geq G(\lambda,l)$, then $Q_0(\lambda,l)\geq Q_1(\lambda,l)$ and therefore the best action is $0$.
\item If $F(\lambda,l)< G(\lambda,l)$, then $Q_0(\lambda,l)< Q_1(\lambda,l)$ and therefore the best action is $1$.
\end{itemize}

Let us focus on $F(\pi(0),l)$ and $G(\pi(0),l)$.

Let us prove that $g_u>-f(l)$. We have:
\begin{eqnarray}
\nonumber g_u+V(\alpha,1)\geq Q_0(\alpha,1),\\
\nonumber g_u+V(\alpha,1)\geq -f(l)+V(\Omega^{ns}(\alpha),l+1),\\
\nonumber g_u+V(\alpha,1)-V(\Omega^{ns}(\alpha),l+1)\geq -f(l),\\
\nonumber g_u>-f(l).
\end{eqnarray}
The inequality is because of the monotonicity of the value function and $\Omega^{ns}(\alpha)<\alpha$.
Suppose that the SU chooses the action $0$ for the state $(\pi(0),l)$. We have:
\begin{eqnarray}
\nonumber g_u+V(\pi(0),l)&=&-f(l)+V(\Omega^{ns}(\pi(0)),l+1),\\
\nonumber g_u+V(\pi(0),l)&\leq&-f(l)+V(\Omega^{ns}(\pi(0)),l),\\
\nonumber g_u+V(\pi(0),l)&\leq&-f(l)+V(\pi(0),l),\\
\nonumber g_u&\leq&-f(l).
\end{eqnarray}
This leads to a contradiction as $g_u>-f(l)$.
Thus, $Q_0(\lambda,l)< Q_1(\lambda,l)$ and therefore, $F(\pi(0),l)<G(\pi(0),l)$. Therefore, the cases 1, 3, 5 and 6 are eliminated. Finally, the optimal policy is a kind of threshold and is depicted in the following:
\begin{itemize}
\item The SU  takes the action $0$ for all beliefs lower than the following threshold $$
    Th1(\lambda,l)=\frac{V(\Omega^{ns}(\lambda|\theta),l+1)-V(\beta,l+1)+c_s}{f(l)+\Phi-P_p+V(\alpha,1)-V(\beta,l+1)},$$
     and take the action $1$ otherwise.
\end{itemize}

\item Second, we assume that $Q_2(\lambda,l)>Q_1(\lambda,l)$ and then, we have to compare the action $0$ and $2$, which is equivalent to compare the action-value functions $Q_0(\lambda,l)$ and $Q_{2}(\lambda,l)$. The SU takes the action $0$ instead of the action $2$ if $Q_0(\lambda,l)\geq Q_{2}(\lambda,l)$, which is equivalent to:

{\scriptsize{\begin{eqnarray*}
\nonumber  -f(l)+V(\Omega^{ns}(\lambda|\theta),l+1) &\geq& -c_s+\lambda(\Phi-P_p+V(\alpha,1))\\\nonumber&&+(1-\lambda)(\phi-P_{3G}+V(\beta,1)), \\
\nonumber V(\Omega^{ns}(\lambda|\theta),l+1)&\geq& V(\beta,1)+\Phi+f(l)-c_s-P_{3G}\\\nonumber&&+\lambda(P_{3G}-P_p+V(\alpha,1)-V(\beta,1)).
\end{eqnarray*}}}
We have from the Lemma \ref{cogvs3g}, that $P_{3G}-P_p+V(\alpha,1)-V(\beta,1)\geq 0$. Then, we can provide the same analysis presented in the previous case with the function $F(\lambda,l)={V(\Omega^{ns}(\lambda|\theta),l+1)}$ and the function $G(\lambda,l)=V(\beta,1)+\Phi+f(l)-c_s-P_{3G}+\lambda(P_{3G}-P_p+V(\alpha,1)-V(\beta,1))$. The latter is linear increasing in $\lambda$. We obtain the following threshold policy:
\begin{itemize}
\item The SU takes the action $0$ for all beliefs lower than the following threshold: {\scriptsize{$$
    Th2(\lambda,l)=\frac{V(\Omega^{ns}(\lambda|\theta),l+1)-V(\beta,1)-\Phi-f(l)+c_s+P_{3G}}{P_{3G}-P_p+V(\alpha,1)-V(\beta,1)},$$}}
    and take the action $2$ otherwise.
\end{itemize}

\end{itemize}

\subsection{Proof of Proposition \ref{aftersucc}}\label{apaftersucc}
We have from the Lemma \ref{beliefupdate} that if $\lambda>\pi(0)$ then $\Omega^{ns}(\lambda)\leq\lambda$. Suppose that the SU takes the action $0$ for a belief $\lambda$ and packet delay $l$. Thus we have
\begin{eqnarray*}
\nonumber g_u+V(\lambda,l)&=&-f(l)+V(\Omega^{ns}(\lambda),l+1),\\
\nonumber g_u+V(\lambda,l)&\leq&-f(l)+V(\Omega^{ns}(\lambda),l),\\
\nonumber g_u+V(\lambda,l)&\leq&-f(l)+V(\lambda,l),\\
\nonumber g_u&\leq&-f(l).
\end{eqnarray*}
This leads to a contradiction as $g_u>-f(l)$. The first inequality is because the value function is decreasing with the packet delay and the second one is because that the value function is increasing with the belief and $\Omega^{ns}(\lambda)\leq\lambda$. Thus, if $\lambda>\pi(0)$, then the SU never takes the action $0$ and then $Q_0(\lambda,l)< \max{\{Q_{1}(\lambda,l),Q_{2}(\lambda,l)\}}$.
\subsection{Proof of Proposition \ref{propl}}\label{appropl}
Let us compare the value-action functions  $Q_1(\lambda,l)$ and $Q_2(\lambda,l)$ for all belief vector $\lambda$ and packet delay $l$. The SU waits for next time slot after sensing if $Q_1(\lambda,l)\geq Q_{2}(\lambda,l)$, which is equivalent to:
{\small{\begin{eqnarray*}
   -c_s+\lambda(\Phi-P_p+V(\alpha,1))&&\\+(1-\lambda)(-f(l)+V(\beta,l+1))&\geq&-c_s+\lambda(\Phi-P_p+V(\alpha,1))\\
 &&+(1-\lambda)(\phi-P_{3G}+V(\beta,1)) ,\\
 -f(l)+V(\beta,l+1)\phi-P_{3G}+V(\beta,1)&\geq&0.
\end{eqnarray*}}}
Remark that this condition depends only on the packet delay $l$ and not on the belief vector $\lambda$.


\subsection{Proof of Corollary \ref{corol2}}\label{apcorol2}
If $-f(l)$ is lower than $\Phi-P_{3G}$, then $-f(l)-\Phi+P_{3G}+V(\beta,l+1)-V(\beta,1)$ is always negative. In fact, $V(\beta,2)-V(\beta,1)$ is negative and  $-f(l)-\Phi+P_p+V(\beta,l+1)-V(\beta,1)$ is decreasing with $l$. Therefore, the previous expression is negative for all $l\geq 1$.


\begin{thebibliography}{9999}
%




























%
\bibitem{EkHoss} E. Hossain, D. Niyato and Zhu Han, \emph{Dynamic spectrum access and management in cognitive radio networks}, Cambridge, 2009.
\bibitem{mitola}J. Mitola, \emph{Cognitive radio: An integrated agent architecture for software defined radio}, PhD Dissertation, Royal Inst. Technol. (KTH), Stockholm, Sweden, 2000.
\bibitem{Ak06}F. Akyildiz, Won-yeol Lee and al., \emph{NeXt generation dynamic spectrum access cognitive radio wireless networks: A survey}, Computer Networks, vol. 50, no. 13, 2006.
\bibitem{Jaganathan}K. Jaganathan, I. Menache, E. Modiano, and G. Zussman,\emph{ Non-cooperative Spectrum Access - The Dedicated vs. Free Spectrum Choice}, in Proceedings of ACM MOBIHOC'11, 2011.
\bibitem{Su08} H. Su and X. Zhang,   \emph{Cross-layer based opportunistic MAC protocols for QoS provisionning over cognitive radio wireless networks}, in IEEE Journal on Selected Areas in Communication, vol. 26, no. 1, 2008.

\bibitem {Qing} Q. Zhao, L. Tong, A. Swami and Y. Cheng, \emph{Decentralized cognitive MAC for opportunistic spectrum access in ad Hoc networks: A POMDP framework}, in IEEE journal on selected areas in communication, vol. 25, no. 3, 2007.
\bibitem{Hua08}H. Liu, B. Krishnamachari and Q. Zhao, \emph{Cooperation and learning in multiuser opportunistic spectrum access}, in proceedings of ICC, 2008.
\bibitem{zheng05}H. Zheng, and C. Peng, \emph{Collaboration and Fairness in Opportunistic Spectrum Access}, in proceedings of IEEE International Conference on Communication (ICC), 2005.
\bibitem{Shi12} Y. Shi, Y. Hou, H. Zhou and S. Midkiff , \emph{Distributed Cross-Layer Optimization for Cognitive Radio Networks}, in IEEE Transactions on Vehicular Technology, vol. 59, no.8, 2012.
\bibitem{Min11}A. Min, K. Kim, J. Singh and K. Shin \emph{Opportunistic Spectrum Access for Mobile Cognitive Radios}, in proceedings of IEEE Infocom, 2011.
\bibitem{energy2}A. T. Hoang, Y. C. Liang, D. T. C. Wong, Y. Zeng, and R. Zhang, \emph{Opportunistic Spectrum Access for Energy-constrained Cognitive Radios}, in IEEE Transactions on Wireless Communications, vol. 8, no. 3, 2008.
\bibitem{Yu08}Y. Chen, Q. Zhao and A. Swami, \emph{Distributed Spectrum Sensing and Access in Cognitive Radio Networks With Energy Constraint}, in IEEE Transactions on Signal Processing, vol. 57, no. 2, 2009.
\bibitem{Sultan} A. Sultan, \emph{Sensing and transmit energy optimization for an energy harvesting cognitive radio}, in IEEE Wireless Communication Letters,  2012.
\bibitem{Saaverdra14} A. Garcia-Saaverdra, P. Serrano and A. Banchs,   \emph{Energy-efficient Optimization for Distributed Opportunistic Scheduling}, in IEEE Communications Letters, vol. 18, no; 6, 2014.
\bibitem{Xiong14} C. Xiong, L. Lu and G. Li,   \emph{Energy-Efficient Spectrum Access in Cognitive Radios}, in IEEE Journal on Selected Areas in Communications, vol. 32, no. 3, 2014.

\bibitem{wu09} Y. Wu, D. Tsang and L. Qian, \emph{Energy-Efficient Delay-Constrained Transmission and Sensing for Cognitive Radio Systems}, in IEEE Transactions on Vehicular Technology, vol 61, no 7, 2012.
\bibitem{Pei11} Y. Pei, Y. Liang, K. Teh and K. Li,   \emph{Energy-efficient design of sequential channel sensing in cognitive radio networks: optimal sensing strategy, power allocation, and sensing order}, in IEEE Journal on Selected Areas in Communication vol. 29, no. 8, 2011.
    \bibitem{oussama}   O. Habachi, R. El Azouzi, and Y. Hayel, \emph{A Stackelberg Model for Opportunistic Sensing in Cognitive Radio Networks}, in Transactions on Wireless Communications, vol. 12, no. 5, 2013.
\bibitem{sondik}R. Smallwood and E.  Sondik, \emph{The optimal control of partially observable Markov decision processes over a finite horizon}, Operations Research, vol 21,pp 1071-1088, 1973.
\bibitem{lovejoy}W. S. Lovejoy, \emph{Some Monotonicity Results for Partially Observed Markov Decision Processes}, Oper. Res. vol. 35, no. 5, 1987.
\bibitem{Sun13}H. Sun, A. Nallanathan, C. Wang and Y. Chen, \emph{Wideband spectrum sensing for cognitive radio networks: a survey}, in IEEE Transactions on Wireless Communications, vol. 20, no. 2, 2013.



\bibitem{SS09} S. Shellhammer, A. Sadek and W. Zhang,  \emph{Technical Challenges for Cognitive Radio in the TV White Space Spectrum}, Information Theory and Appplications, 2009.
\bibitem{Putterman} M. L. Putterman, \emph{Markov Decision Process Discrete Stochastic Dynamic Programming}, WILEY Series in Prob. and Stat., 2005.
\bibitem{Kaelbling98}
L. Kaelbling, M. Littman, A. Cassandra, \emph{Planning and acting in partially observable stochastic domains} Artificial Intelligence Journal 101: 99Ð134, 1998.




\end{thebibliography}
\end{document}